\def\prec{<}
\def\OddCycle{\text{OddCycle}}
\def\squareforqed{\hbox{\rlap{$\sqcap$}$\sqcup$}}
\def\qed{\ifmmode\squareforqed\else{\unskip\nobreak\hfil
\penalty50\hskip1em\null\nobreak\hfil\squareforqed
\parfillskip=0pt\finalhyphendemerits=0\endgraf}\fi}
\newtheorem{theorem}{Theorem}[section]
\newtheorem{definition}{Definition}[section]
\newtheorem{lemma}{Lemma}[section]
\newtheorem{example}{Example}[section]
\newenvironment{proof}{\smallskip\noindent{\em Proof.}}{\hfill\qed\medskip}
\title{Enumerating Subgraph Instances Using Map-Reduce%
\thanks{This work was supported by the project Handling Uncertainty in Data Intensive Applications, co-financed by the European Union (European Social Fund - ESF) and Greek national funds, through the Operational Program ``Education and Lifelong Learning'', under the research funding program THALES.}}
\author{Foto N.\ Afrati \\
National Technical Univ.\\ of Athens\\
{\tt afrati@softlab.ece.ntua.gr}
\and
Dimitris Fotakis\\
National Technical Univ.\\ of Athens\\
{\tt fotakis@cs.ntua.gr}
\and
Jeffrey D.\ Ullman\\
Stanford University\\
{\tt ullman@gmail.com}}
\begin{document}
\date{\today}
\maketitle

\begin{abstract}
The theme of this paper is how to find all instances of a given ``sample'' graph in a larger ``data graph,'' using a single round of map-reduce.  For the simplest sample graph, the triangle, we improve upon the best known such algorithm.  We then examine the general case, considering both the communication cost between mappers and reducers and the total computation cost at the reducers.  To minimize communication cost, we exploit the techniques of \cite{AU10} for computing multiway joins (evaluating conjunctive queries)  in a single map-reduce round.  Several methods are shown for translating sample graphs into a union of conjunctive queries with as few queries as possible.  We also address the matter of optimizing computation cost.
Many serial algorithms are shown to be ``convertible,'' in the sense that it is possible to partition the data graph, explore each partition in a separate reducer, and have the total work at the reducers be of the same order as the work of the serial algorithm. For data graphs of unrestricted degree, we show that there are convertible algorithms whose running time is of the same order as the lower bounds on number of occurrences of the sample graph that were provided by \cite{Alon81}.  We also offer better convertible algorithms when the degree of nodes in a data graph of $m$ nodes is limited to $\sqrt{m}$.
\end{abstract}

\section{Introduction}
\label{intro-sect}

We address the problem of finding all instances of a given subgraph (the {\em sample graph}) in a very large graph (the {\em data graph}). The problem is computationally intensive, so we shall concentrate on algorithms that can be executed by a single round of map-reduce \cite{DeanGhemawat}. We investigate how to minimize two important measures of complexity.
The first is the {\em communication cost}, i.e., how to hash the edges of the data graph to the reducers in order to minimize the total amount of data transferred from the mappers to the reducers. This problem is, in a sense, a special case of evaluating conjunctive queries or multiway joins on a single large relation, so our starting point is the algorithms for optimal evaluation developed in \cite{AU10} and \cite{AU11-Datalog2.0}.
The second measure is the computation cost, which can be increased significantly when
we move from serial algorithms to their parallel implementation. However, our techniques
derive a parallel algorithm of the same complexity as the serial algorithm.

\subsection{Applications}
\label{apps-subsect}

Finding occurrences of particular sample graphs in a social network is a tool for analyzing and understanding such networks. For instance, \cite{Leskovec} shows how the stage of evolution of a community can be related to the frequency with which certain sample graphs appear. Our results apply directly to problems of these types. Similarly, \cite{AlonDHHS08} discusses how discovering instances of sample graphs supports work in Biomolecular networks.

Another example application concerns analysis of networks for discovering potential threats or discovering recommendations. In these applications, the edges of the network are labeled and possibly directed (e.g., ``buys from'' or ``knows''), and the goal is to find sets of individuals with specific interconnections among them (see e.g., \cite{VS-COSI,VS-Budgeting}). For example, aiming to discover potential threats, one may want to answer questions like ``find all instances of five people booked on the same flight each of whom has bought explosive materials in the past three months.'' Our methods can be extended to this sort of problems as well, although there are two relatively simple extensions needed.

\begin{enumerate}
\item The cited papers assume that the query (sample graph) specifies at least one node (individual) of the data graph.  As a result, optimum algorithms for evaluation will surely start by searching from the fixed node or nodes.  However, eventually, this search will lead to a neighborhood that is sufficiently large that sequential search no longer makes sense.  At that point, our methods can take over with what remains of the sample graph after removing nodes that have been explored on the data graph.

\item We assume that edges are unlabeled.  However, a graph with labeled edges can be represented by a collection of relations, one for each label.  Search for instances of a sample graph can still be expressed as a conjunctive query, and the same techniques applied.
\end{enumerate}

A discussion about the importance of using the map-reduce environment for finding subgraph instances in large data graphs can be found in~\cite{Cohen09}.

\subsection{Measures of Complexity}
\label{costs-subsect}

There are two ways to measure the performance of map-reduce algorithms.

\begin{enumerate}
\item {\em Communication cost} is the amount of data transmitted from the mappers to the reducers.  In the algorithms discussed here, edges of the data graph are replicated; i.e., they are associated with many different keys and sent to many reducers.

\item {\em Computation cost} is the total time spent by all the mappers and reducers. In the algorithms to be discussed, the mappers do nothing but assign keys to the input (the edges of the data graph), so their computation cost is proportional to the communication cost. We shall therefore discuss only the computation cost at the reducers in this paper.
\end{enumerate}
These measures and their relationship are discussed in \cite{ABCPU-EDBT11}.

Another measure we address is the ``number of reducers'' used by different algorithms. What we are actually measuring is the number of different keys, and this quantity is an upper bound on the number of Reduce tasks that could be used.  Lowering the number of reducers is not necessarily a good thing, but the communication cost for all the algorithms discussed grows with the number of reducers.  See Section~\ref{tri-compare-subsect}, where we show how algorithms that are parsimonious in their use of reducers can lead to lower communication cost when the number of reducers is fixed.

\subsection{Outline of the Paper and Contributions}
\label{outline-subsect}

This paper is the first to offer algorithms for enumerating all instances of an arbitrary sample graph in a large data graph, using a single map-reduce round.  We combine efficient mapping schemes to minimize communication cost with efficient serial algorithms to be used at the reducers.
Throughout the paper:

\begin{enumerate}
\item The data graph is denoted $G$ and has $n$ nodes and $m$ edges.

\item The sample graph is denoted $S$ and has $p$ nodes.
\end{enumerate}

We first address the communication cost of algorithms for finding all instances of a sample graph in a data graph.  Section~\ref{tri-mj-sect} motivates the entire body of work.  We apply the multiway join algorithm of \cite{AU10} to the triangle-finding problem.  Although multiways joins are frequently more expensive than a cascade of two-way joins, for the problem of finding triangles, or more generally instances of almost any sample graph, the multiway join in a single round of map-reduce is more efficient than two-way joins, each performed by its own round of map-reduce.
Specifically, our result of using multiway joins is an improvement to the one-round algorithm of \cite{SV11}.% mentioned in Section~\ref{rel-work-subsect}.

The balance of the paper deals with arbitrary sample graphs and is divided into two parts. First we look at communication cost beginning in Section~\ref{sg-cq-sect} and then we address computation cost starting in Section~\ref{comp-cost-sect}.

In Section~\ref{sg-cq-sect} we look at arbitrary sample graphs.  We generate from a given sample graph a collection of conjunctive queries with arithmetic constraints that together produce each instance of the sample graph exactly once.  We then use the automorphism group of the sample graph and the collection of edge orientations to simplify this collection, while still producing each instance only once.

Section~\ref{opt-eval-sect} covers the optimal evaluation of the conjunctive queries from Section~\ref{sg-cq-sect}.  We give a simple algorithm for minimizing the communication cost for a single conjunctive query, and then show how it can be modified to allow all the conjunctive queries to be evaluated in one map-reduce round.
We show that combining the evaluation of all conjunctive queries into a single map-reduce job always beats their separate evaluation.

Then, Section~\ref{cyc-sect} looks at the special case of enumerating cycles of fixed length.  We give a method for generating a smaller set of conjunctive queries than is obtained by the general methods of Section~\ref{sg-cq-sect}.

Section~\ref{comp-cost-sect} begins our examination of optimizing the computation cost of map-reduce algorithms.  All the map-reduce algorithms we discuss involve partitioning the nodes and edges of the data graph into subgraphs and then looking for instances of the sample graph in parallel, in each subgraph.  Thus, the key question to address is under what circumstances a serial algorithm for finding instances of a sample graph $S$ will yield a map-reduce algorithm with the same order of magnitude of computation.  We call such an algorithm {\em convertible}  and give a (normally satisfied) condition under which an algorithm is convertible.

It turns out that all sample graphs have convertible algorithms.  The real question is what is the most efficient convertible algorithm.  The results of \cite{Alon81} give worst-case lower bounds for the running time of serial algorithms, and we shall show in Section~\ref{serial-alg-sect} that these lower bounds can be met by convertible algorithms (Theorem \ref{gen-enum-th}) . Moreover, when we limit the degree of nodes in the data graph, we can obtain more efficient, yet still convertible, algorithms (Theorem \ref{bounded-degree-th}).

\subsection{Related Work}
\label{rel-work-subsect}

In the 1990's there was a considerable effort to find good algorithms to (a) detect the existence of cycles of a given length and/or (b) count the cycles of a given length \cite{AYZ97}.  A generalization to other sample graphs appears in \cite{KLL11}. These problems reduce to matrix multiplication.  However, enumeration of all instances of a given subgraph cannot be so reduced.  Probabilistic counting of triangles was discussed in \cite{CTKMF09}.
More recently, there has been significant interest in probabilistic (approximate) counting of small sample graphs on large biological and social networks.  To this end, Alon et al.\ \cite{AlonDHHS08} applied the color-coding technique, and obtained a randomized approximation algorithm for counting the occurrences of a bounded-treewidth sample graph with $p$ nodes on a data graph with $n$ nodes in $O(2^{O(p)} n)$ time. Subsequently, Zhao et al.\ \cite{ZhaoKKM10} showed how the approach of \cite{AlonDHHS08} can be parallelized in a way that scales well with the number of processors.

Enumeration of triangles has received attention recently.   It was the subject of the thesis by Schank \cite{Schank07}.
Suri and Vassilvitskii \cite{SV11} give one- and two-round map-reduce algorithms for finding triangles. In this paper, we begin with an improvement to their one-round algorithm, obtained by the use of multiway joins, in Section~\ref{tri-mj-sect}, and then give the extensions needed for arbitrary subgraphs.

In \cite{PaghT11} the triangle finding problem in map reduce is experimentally studied; actually, this paper implements a randomized counting algorithm for triangles. Finally, in a related problem, a few papers have investigated recently the question of counting the output of a multiway join \cite{AtseriasGM08} and finding a serial algorithm for computing optimally a multiway join \cite{NgoPRR12}. The complexity of the algorithm presented in \cite{NgoPRR12} is the same as the worst case maximum size of the output of a multiway join \cite{AtseriasGM08}. The upper bound of \cite{AtseriasGM08} on the output size of a multiway join is obtained as the solution of a linear program, and is essentially tight, in the sense that for infinitely many sizes of input relations, there exists a multiway join instance with the prescribed output size.
On the other hand, one can show that the bound of \cite{AtseriasGM08} is arbitrarily bad at an infinite number of combinations of relation sizes. However, since our joins use only a single relation, we match exactly the bound of \cite{AtseriasGM08} for all sizes, except for matters of ``rounding errors.''

\section{Triangles and Multiway Joins}
\label{tri-mj-sect}

In this section we see that the problem of finding triangles in a large graph using a single round of map-reduce is a special case of computing a multiway join.  We begin by discussing the ``Partition Algorithm,'' which is a recent idea that almost-but-not-quite implements a multiway join.  Then, we show how to apply the technique of \cite{AU10} for optimal implementation of multiway joins by map-reduce.  Finally, we combine these ideas with those of Partition to get a method that works better than either.

\subsection{The Partition Algorithm of Suri and Vassilvitskii}
\label{partition-subsect}

\cite{SV11} gives the {\em Partition Algorithm} for enumerating triangles using a single round of map-reduce.  This method has the property that the total work done by the mappers and reducers is no more than proportional to the work that would be done by a serial algorithm for the same problem.  However, as we shall show in Section~\ref{tri-compare-subsect}, the communication cost of Partition is almost, but not quite as good as one can do.

Partition works as follows.
Given a data graph of $n$ nodes and $m$ edges, partition the $n$ nodes into $b$ disjoint subsets $S_1,S_2,\ldots,S_b$ of equal size.  For each triple of integers $1\le i<j<k\le b$ create a reducer $R_{ijk}$; thus there are
$\binom{b}{3} = b(b-1)(b-2)/6$
or approximately $b^3/6$ reducers.  The mappers send to each $R_{ijk}$ those edges both of whose nodes are in $S_i\cup S_j\cup S_k$.  Thus, each reducer has a smaller graph to deal with; that graph has $3n/b$ nodes and an expected number of edges $m/b^2$.  The paper \cite{SV11} shows that assuming a random distribution of the edges, the total work of the mappers and reducers is $O(m^{3/2})$, which is also the running time of the best serial algorithm \cite{Schank07}.

The communication cost for Partition can be calculated as fol\-lows.  An expected fraction $1/b$ of the edges will have both their ends in the same partition, say $S_i$.  This edge must be sent by the mappers to $\binom{b-1}{2} = (b-1)(b-2)/2$ of the reducers~-- the reducers corresponding to all the subsets of three integers that includes $i$.  The remaining fraction $(b-1)/b$ of the edges have their ends in two different partitions, say $S_i$ and $S_j$.  These edges are sent to only $b-2$ reducers, those corresponding to the subsets of the integers that include both $i$ and $j$.
The total communication per edge between the mappers and reducers is thus
$$\frac1b (b-1)(b-2)/2 + \frac{b-1}{b} (b-2) = \frac32(b-1)(b-2)/b$$
For large $b$, the total communication cost for all the edges is approximately $3bm/2$.

As we shall see, the small problem with the partition algorithm is the fact that some edges need to be copied too many times.  This problem also shows up in the details of the algorithm, where certain triangles~-- those with an edge both of whose ends are in the same partition~-- are counted more than once, and the algorithm as described in \cite{SV11} needs to do extra work to account for this anomaly.  In the variant we propose in Section~\ref{improved-subsect}, all edges are replicated the same number of times, and the communication cost is lowered from $3b/2$ to $b$ per edge.

\subsection{The Multway-Join Algorithm}
\label{mj-subsect}

In \cite{AU10} the execution of multiway joins by a single round of map-reduce was examined, and it was shown how to optimize the communication cost.  In fact, the case of finding triangles was considered in the guise of computing a simple cyclic join $R(X,Y)\bowtie S(Y,Z)\bowtie T(X,Z)$.  In the case that the edges are unlabeled (the only case we consider here), the relations $R$, $S$, and $T$ are the same; let us call it $E$.  Then enumerating triangles can be expressed as evaluating the join
$E(X,Y)\bowtie E(Y,Z)\bowtie E(X,Z)$.

There is an important issue that must be resolved, however: does an edge $(a,b)$ appear as two tuples of $E$ or as only one?  If we use tuples $E(a,b)$ and $E(b,a)$, then in the join each triangle is produced six times.  It is not hard to eliminate five of the copies; just produce $(X,Y,Z)$ as an output if and only if $X<Y<Z$ according to a chosen ordering of the nodes.  However, this approach is somewhat like counting cows in a field by counting the legs and dividing by 4.  That's not too bad, but when we count subgraphs with larger numbers of nodes, we wind up counting centipedes or millipedes that way, and the idea cannot be sustained.

Thus,  we shall assume an ordering $(<)$ of the nodes, and the tuple $E(a,b)$ will be in relation $E$ if and only if $(a,b)$ is an edge of the graph and $a<b$.  In this case, each triangle is discovered exactly once.

Following the method of \cite{AU10}, to compute the join
$$E(X,Y)\bowtie E(Y,Z)\bowtie E(X,Z)$$
we must by symmetry hash each of the variables $X$, $Y$, and $Z$ to the same number of buckets $b$.  An ordered triple of buckets identifies a reducer.  If we hash each variable to $b$ buckets using hash function $h$, then there are $b^3$ reducers.  If $E(u,v))$ is a tuple of $E$, this edge is sent by its mapper to $3b-2$ reducers in three groups:

\begin{enumerate}

\item
Treated as an edge $E(X,Y)$, it is sent to the $b$ different reducers whose triple is $[h(u),h(v),z]$ for any $z=1,2,\ldots,b$.

\item
Treated as an edge $E(Y,Z)$, it is sent also to the $b$ reducers $[x,h(u),h(v)]$ for any $x$.

\item
Treated as $E(X,Z)$, it is sent to the reducers $[h(u),y,h(v)]$ for any $y$.

\end{enumerate}
However, it is easy to see that regardless of whether or not $h(u)=h(v)$, exactly two of these reducers will be the same.  There are two cases:

\begin{itemize}

\item[a)]
If $h(u)\ne h(v)$, then no reducer in the first group can equal a reducer in the second group, because their middle components must be different.  However, the reducer of the third group, with $y=h(v)$ will be in the first group and the reducer with $y=h(u)$ will be in the second group.

\item[b)]
If $h(u)=h(v)$, then the reducer of the first group with $z=h(u)$, the reducer of the second group with $x=h(u)$, and the reducer of the third group with $y=h(u)$ are all the same reducer, but no other reducers are the same.

\end{itemize}
Thus the communication cost for this algorithm is $m(3b-2)$.\footnote{In practice, it is unlikely we would try to take advantage of the duplication of two reducers, but would in fact create $3b$ key-value pairs for each edge.  The redundancy is small for large $b$, and the map-reduce environment would make it tricky to avoid the redundancy.}

Each of the $b^3$ reducers computes the join for the tuples it is given.
The triangle consisting of nodes $u$, $v$, and $w$, where $u<v<w$ is discovered only by the reducer $[h(u),h(v),h(w)]$.  That is, we may substitute $u$ for $X$, $v$ for $Y$, and $w$ for $Z$, and the tuples $E(u,v)$, $E(v,w)$, and $E(u,w)$ surely exist.  However, if we make any other substitution of $u$, $v$, and $w$ for $X$, $Y$, and $Z$, at least one pair of variables will be out of order and the corresponding tuple will not exist in $E$, although its reverse does.

\subsection{Ordering Nodes by Bucket}
\label{improved-subsect}

We can improve the algorithm of Section~\ref{mj-subsect} by exploiting the fact that the ordering of nodes is subject to our choice and thus can be related to the bucket numbers.  Let $h$ be a hash function that maps nodes to $b$ buckets.  When ordering nodes, think of node $u$ as a pair consisting of $h(u)$ followed by $u$ itself.  That is, all nodes of bucket 1 precede all nodes of bucket 2, which precede nodes of bucket 3, and so on.  Within a bucket, the name of the node breaks ties.

The advantage to this approach is that many of the lists of three buckets now correspond to reducers that get no triangles, and therefore we do not need reducers for these lists.\footnote{Or, since it is likely that the number of reducers will be chosen first, our reasoning allows us to use less communication for a fixed number of reducers, as we shall see.}
We only need a reducer for a list $[i,j,k]$ if $1\le i \le j \le k \le b$.  How many such lists are there?  It is the same as the number of strings with $b-1$ 0's and three 1's, that is, $\binom{b+2}{3} = (b+2)(b+1)b/6$.  In proof, we show that there is a 1-1 correspondence between the lists and the strings just described.  Consider a string of 0's and 1's where the first 1 is in position $p$, the second 1 is in position $q$, and the third in position $r$.  This string corresponds to $[i,j,k]$ where $i = p$, $j = q-1$, and $k = r-2$.  Each list corresponds to a unique string, and each string corresponds to a unique list; we prove a more general observation in Theorem~\ref{order-buckets-th}.  Thus, like the Partition Algorithm, the method of this section uses approximately $b^3/6$ reducers.

As with the algorithm of Section~\ref{mj-subsect}, each reducer handles the portion of the data graph that it is given.  A triangle is discovered by only one reducer~-- the reducer that corresponds to the buckets of its three nodes, in sorted order.

We claim that the communication cost for this algorithm is $b$ per edge.
Let $(u,v)$ be an edge of the graph.  This edge must be sent to all and only the reducers corresponding to the sorted list consisting of $h(u)$, $h(v)$, and any one of the buckets from 1 to $b$.  Note that some of these lists have repeating bucket numbers, as must be the case since some triangles have two or three nodes that hash to the same bucket.

The argument that \cite{SV11} used to show that the map-reduce implementation of Partition has the same order of computation time as the serial algorithm also works for this algorithm.
The serial algorithm takes $O(m^{3/2})$ time on a graph of $m$ nodes.  If we hash nodes to $b$ buckets, each of the $m$ edges goes to $b$ reducers.  There are $O(b^3)$ reducers, so each reducer gets an expected $O(m/b^2)$ edges.  The time this reducer will take to find triangles in its portion of the graph is $O\bigl((m/b^2)^{3/2}\bigr) = O(m^{3/2}/b^3)$.  But since there are $O(b^3)$ reducers, the total computation cost of all the reducers is $O(m^{3/2})$, exactly as for the serial algorithm.

\subsection{Comparison of Triangle-Finding Algorithms}
\label{tri-compare-subsect}

The three algorithms are rather similar in critical measures, but the one given in Section~\ref{improved-subsect} is best for communication cost by a small amount.  First, let us assume that there are $k$ reducers, and $k$ is large enough that $b$ plus or minus a constant can be approximated by $b$.  Then Fig.~\ref{asymp-fig} gives the communication cost for each of the algorithms.
Using the same number of reducers, the algorithm of Section~\ref{improved-subsect} beats the Partition Algorithm for communication cost by a factor of $3/2$ and beats the algorithm of Section~\ref{mj-subsect} by a factor of $3/\sqrt[3]{6} = 1.65$.
The computation costs for the three algorithms are similar, but only the Partition Algorithm finds some triangles more than once, requiring extra time to compensate for that effect.\footnote{While it is true that the difference in communication cost is only a constant factor, note that there is no ``big-oh'' involved.  For each of the algorithms, the key-value pairs are the same; they consist of a list of three bucket numbers, an edge, and an indication of between which of the three pairs of nodes the edge lies.}

\begin{figure}[htfb]

\begin{center}
\begin{tabular}{| l || c | c|}
\hline
Algorithm & Buckets $b$ & Communication Cost\\
\hline\hline
Partition & $\sqrt[3]{6k}$ & $3m\sqrt[3]{6k}/2$\\
\hline
Section~\ref{mj-subsect} & $\sqrt[3]{k}$ & $3m\sqrt[3]{k}$\\
\hline
Section~\ref{improved-subsect} & $\sqrt[3]{6k}$ & $m\sqrt[3]{6k}$\\
\hline

\end{tabular}
\end{center}

\caption{Asymptotic performance of three triangle-finding algorithms}
\label{asymp-fig}

\end{figure}

For a comparison using specific values of $b$, note that $216 = 6^3$ and $220 = \binom{12}{3}$.   Figure~\ref{216-220-fig} makes the comparison, using 216 reducers for the algorithm of Section~\ref{mj-subsect} and (almost the same number) 220 reducers for the other two algorithms.  We see that the asymptotic comparison holds up for these reasonable numbers of reducers.
\begin{figure}[htfb]

\begin{center}
\begin{tabular}{| l || c | c| c |}
\hline
Algorithm &Buckets $b$ &  Reducers & Communication Cost\\
\hline\hline
Partition & 12 & 220 & $13.75m$\\
\hline
Section~\ref{mj-subsect} & 6 & 216&  $16m$\\
\hline
Section~\ref{improved-subsect} & 10 &  220 & $10m$\\
\hline

\end{tabular}
\end{center}

\caption{Comparison of algorithms for specific numbers of reducers.  For Partition, the 12 ``buckets'' is really the number of sets into which the nodes are partitioned.  The constant 13.75 in the communication cost for this algorithm is $\frac32(b-1)(b-2)/b$ for $b=12$.
The constant 16  for the algorithm of Section~\ref{mj-subsect} is $3b-2$ for $b=6$}
\label{216-220-fig}

\end{figure}
%\newpage

\section{Sample Graphs and Conjunctive Queries}
\label{sg-cq-sect}

When we consider finding instances of sample graphs more complex than triangles, the multiway-join approach continues to apply and to be superior to a cascade of two-way joins. However, the joins must be constrained by arithmetic comparisons among nodes, to enforce certain node orders.  Those constraints in turn are needed to avoid producing an instance of a sample graph more than once.  A natural notation for such joins-plus-selections is conjunctive queries (abbreviated CQ) with arithmetic comparisons (\cite{YuOzs} \cite{Gupta}), and we shall use this notation in what follows.

Suppose we are searching for instances of a sample graph $S$.  A CQ will have a variable corresponding to each node of $S$, and it will have a relational subgoal $E(X,Y)$ whenever $S$ has an edge between nodes $X$ and $Y$.
Relation $E(X,Y)$ contains each edge of the data graph exactly once, and does so in the order $X<Y$; i.e., the node in the first argument precedes the node in the second argument according to some given order of the nodes.  For the case of a general sample graph $S$, we construct CQ's for $S$  by a three-step process:

\begin{enumerate}
\item Typically, the sample graph $S$ will have a nontrivial automorphism group.\footnote{An {\em automorphism} is a 1-1 mapping from nodes to nodes that preserves the presence of an edge.} Thus, some orders of the nodes of $S$ are automorphic to others, and the node orders fall into equivalence classes.  Select one representative from each equivalence class.

\item For each chosen order of the nodes of $S$, write a CQ that uses subgoals $E$ with arguments chosen to respect that ordering.  The arithmetic condition for the CQ enforces the ordering.  That is, we believe that $E$ will contain only edges oriented in the direction given by the ordering $<$, but we do not rely on that assumption when expressing CQ's.

\item For most sample graphs $S$, there will be several selected CQ's that have the same orientation of all the edges of $S$; i.e., the relational subgoals of the CQ's will be identical, although the arithmetic conditions will differ. Combine CQ's with identical edge orientations by taking the logical OR of the arithmetic conditions.
\end{enumerate}

\noindent{\em Remark.}
Regarding step~(1) and the size of the automorphism group of $S$, we note that using such methods to mine data, e.g., for discovering potential threats or for providing recommendations, mostly involves answering questions with a lot of symmetry in them (see e.g. that mentioned in Section~\ref{apps-subsect}). Hence, we expect that $S$ will typically have a relatively large automorphism group, which is exploited by our approach.
On the other hand, almost all very large random graphs are asymmetric, i.e., they have no nontrivial automorphisms. However, the sample graphs are typically very small, and most small graphs have nontrivial automorphism groups. For example, all graphs with 5 nodes have nontrivial automorphisms, and there are only 4 asymmetric graphs with 6 nodes (see e.g., \cite{ER63}).

\subsection{Generating CQ's from Orderings}
\label{cq-from-order-subsect}

We begin with step~(2), the simplest point.  Given an ordering $X_1,X_2,\ldots,X_p$ for the nodes of the sample graph $S$, the CQ has:

\begin{enumerate}
\item A relational subgoal $E(X_i,X_j)$ if $S$ has an edge $(X_i,X_j)$ and $i<j$.

\item An arithmetic subgoal $X_i<X_{i+1}$ for all $i=1,2,\ldots,p-1$.
\end{enumerate}

\begin{figure}[htfb]
\centerline{\includegraphics[width=0.22\textwidth]{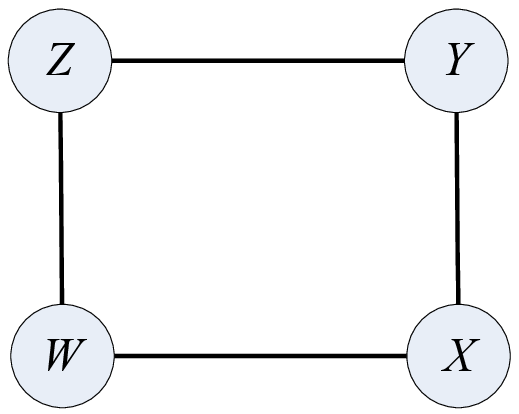}}
\caption{The square}
\label{sq-fig}
\end{figure}

\begin{example}
\label{sq-ex}
Let us consider the square as our sample graph, with nodes labeled by variables as in Fig.~\ref{sq-fig}.  There are 24 orders for the four variables $W$, $X$, $Y$, and $Z$ that label the nodes.  However,  as we shall see, three CQ's suffice to get all squares.  Consider the order $W<X<Y<Z$.  The body of the CQ corresponding to this order is

\begin{verbatim}
     E(W,X) & E(X,Y) & E(Y,Z) & E(W,Z) &
         W<X & X<Y & Y<Z
\end{verbatim}
Notice that each of the four edges is represented by a subgoal $E$ with the two arguments in the required order.
\end{example}

\subsection{Exploiting Automorphisms}
\label{cq-perm-subsect}

Suppose we wish to find all instances of a $p$-node sample graph $S$ in data graphs.  Then for each of the $p!$ orders of the nodes of $S$ there is a CQ.  This CQ has an $E$ subgoal for each edge, with its two arguments in the order required.  The arithmetic condition is that the variables are in the given order.
In Example~\ref{sq-ex} we gave one such CQ when $S$ is the square.  There are 23 others.

However, often $S$ will have a nontrivial automorphism group.  If so, it is not necessary to use a CQ for each permutation.  Rather, one CQ per element of the quotient group suffices.

\begin{theorem}\label{automorphism-th}
Let $S$ be a sample graph with $p$ nodes. We can discover exactly once every instance of $S$ in a data graph $G$ by applying to $G$ one CQ for each member of the group that is the quotient of the symmetric group of $p$ elements (permutations of $p$ things) with the automorphism group of $S$.
\end{theorem}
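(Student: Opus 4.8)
The plan is to set up a group action of the automorphism group $\mathrm{Aut}(S)$ on the set of all $p!$ orderings of the nodes of $S$, show that orderings in the same orbit yield CQ's that produce exactly the same set of instances (up to the required ``each instance once'' property), and conclude that picking one CQ per orbit — i.e., one per coset, hence one per element of the quotient $\mathrm{Sym}_p / \mathrm{Aut}(S)$ — both covers every instance and never produces one twice. First I would make precise what ``the CQ for an ordering $\sigma$'' computes: for a fixed ordering $\sigma$ of the $p$ abstract node-slots, an instance of $S$ in $G$ is a map from the slots to distinct nodes of $G$ that carries each edge of $S$ to an edge of $G$, and the CQ built in Section~\ref{cq-from-order-subsect} outputs exactly those instances whose induced order on the image nodes agrees with $\sigma$ (the arithmetic subgoals $X_i < X_{i+1}$ force this, and since $G$ is a simple graph the image nodes are automatically distinct). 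So the union of the CQ's over \emph{all} $p!$ orderings produces every instance exactly once: each instance induces a unique total order on its image, matching exactly one $\sigma$.

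Next I would introduce the action. A permutation $\pi$ of the $p$ slots acts on an ordering $\sigma$ by relabeling; if $\pi \in \mathrm{Aut}(S)$, then $\pi$ is an isomorphism of $S$ with itself, so for any instance $\mu$ of $S$ the composite $\mu \circ \pi$ is again an instance, and the two instances $\mu$ and $\mu\circ\pi$ have the \emph{same} image node-set in $G$ — they are ``the same occurrence'' of $S$, just described via a different matching of slots to nodes. The key claim is then: the CQ for $\sigma$ and the CQ for $\pi\cdot\sigma$ (with $\pi\in\mathrm{Aut}(S)$) enumerate the same set of occurrences of $S$ in $G$, in the sense that $\mu$ is output by the first iff $\mu\circ\pi^{-1}$ (or the appropriate twist) is output by the second, and these are the same subgraph of $G$. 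Hence it is wasteful to keep both; one representative per $\mathrm{Aut}(S)$-orbit of orderings suffices to hit every occurrence. Conversely, two orderings in \emph{different} orbits can never yield the same occurrence, because the order an occurrence induces on its nodes pins down $\sigma$ up to the automorphism ambiguity and nothing more — so distinct orbit representatives give CQ's whose output occurrences are disjoint. Combining, exactly one CQ per orbit yields every occurrence exactly once.

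Finally, I would count the orbits. Since $\mathrm{Aut}(S)$ acts on the $p!$ orderings, and — crucially — this action is \emph{free}: if $\pi\cdot\sigma=\sigma$ for an ordering $\sigma$, then $\pi$ fixes every slot, so $\pi$ is the identity. (This is where the ``quotient is a genuine coset space'' phrasing comes from; without freeness one would only get a quotient \emph{set}.) Therefore every orbit has size exactly $|\mathrm{Aut}(S)|$, the number of orbits is $p!/|\mathrm{Aut}(S)|$, and this equals $|\mathrm{Sym}_p / \mathrm{Aut}(S)|$, matching the statement. I expect the main obstacle to be the middle step — verifying carefully that applying an automorphism to an ordering really does produce a CQ enumerating the \emph{same} occurrences and not merely ``isomorphic'' ones, and in particular that no occurrence slips through the cracks or gets double-counted across orbit representatives. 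The bookkeeping is slightly delicate because one must track the distinction between an occurrence as a labeled matching (slot $\to$ node) versus as an unlabeled subgraph of $G$; once that distinction is fixed the argument is a routine orbit-counting computation.
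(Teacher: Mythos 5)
Your proposal is correct and follows essentially the same route as the paper's proof: both arguments observe that composing a matching of an occurrence with an automorphism of $S$ permutes the induced node orderings within a single coset of $\mathrm{Aut}(S)$ in the symmetric group, so the CQ's within one class all find the same occurrences while CQ's from different classes find disjoint ones. Your version merely makes the paper's informal argument explicit as a free group action with an orbit count, which is a reasonable tightening but not a different proof.
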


\begin{proof}
Suppose graph $G$ has an instance of $S$, say $G_0$, and $\mu$ is an automorphism on $S$.  Let $\nu$ map the nodes of $G_0$ to the nodes of $S$.  Then $\mu\circ\nu$ is also a mapping from $G_0$ to $S$.  The nodes of $G_0$ are in some order.  When we apply $\nu$ to these nodes, they induce a particular order on the nodes of $S$.   That order gives rise to a CQ $Q_1$, and we know that the nodes of $G_0$ will satisfy $Q_1$.  But we can also map $G_0$ to $S$ using the mapping $\mu\circ\nu$, and this mapping induces another order on the nodes of $S$, an ordering that has CQ $Q_2$.  It follows that $G_0$ will also satisfy $Q_2$.  Since the inverse of an automorphism is also an automorphism, the same argument shows that every instance of $S$ identified by $Q_2$ is also identified by $Q_1$.

Therefore, we neither want nor need to use both $Q_1$ and $Q_2$ when searching for instances of $S$.  In general, if $S$ has $p$ nodes, we start with the symmetric group of $p$ elements and take the quotient of that group with the automorphism group of $S$.  The quotient group consists of classes of orders of the nodes of $S$.  We choose one representative ordering from each class.  There can be no automorphisms between representatives of different classes, so the CQ's for these representatives can never produce the same instance of $S$ within $G$.  On the other hand, every instance of $S$ in $G$ has some ordering of its nodes, and therefore is discovered by the CQ for that ordering.  Therefore, it is also discovered by the CQ representing the class to which that ordering belongs in the quotient group.
\end{proof}

\begin{example}\label{sq-ordering-ex}
The square has an automorphism group of size eight and a symmetric group of size 24.
The automorphisms of the square are described by allowing a rotation of the square to any of four positions.  Additionally, we can choose to ``flip'' the square (turn the paper on which it is written over) or not.  For example, using the node names of Fig.~\ref{sq-fig} the automorphisms of the order $WXYZ$ are the identity, $XYZW$ (rotate 90 degrees clockwise), $YZWX$ (rotate 180 degrees), $ZWXY$ (rotate 270 degrees), and the four flips of these rotations: $WZYX$, $ZYXW$, $YXWZ$, and $XWZY$.  Intuitively, these orders are those in which the four nodes of the square form an increasing sequence in one direction around the square with any node as the starting point.

Since $24/8=3$, we expect there are two other sets of orders that are automorphic.  One of these is the orders in which two opposite corners are each higher than the other two opposite corners.  These orders are $WYXZ$, $YWXZ$, $WYZX$, $YWZX$, $XZWY$, $ZXWY$, $XZYW$, and $ZXYW$.  The other group covers the cases where two opposite corners are the extreme values (low and high), and the other two nodes are in the middle.  These are $WXZY$, $WZXY$, $YXZW$, $YZXW$, $XWYZ$, $XYWZ$, $ZWYX$, and $ZYWX$.

Pick representatives, say $WXYZ$, $WYXZ$, and $WXZY$, for each of the three groups.  Then the three CQ's that together find each square exactly once are:

\begin{verbatim}
     E(W,X) & E(X,Y) & E(Y,Z) & E(W,Z) &
         W<X & X<Y & Y<Z
     E(W,X) & E(Y,X) & E(Y,Z) & E(W,Z) &
         W<Y & Y<X & X<Z
     E(W,X) & E(X,Y) & E(Z,Y) & E(W,Z) &
         W<X & X<Z & Z<Y
\end{verbatim}
Notice that all three have the subgoals $E(W,X)$ and $E(W,Z)$, but differ in the orders of the arguments of the second and third subgoals.  Also, in the second and third CQ's, the arithmetic condition enforces a total order that is stronger than what the order of arguments of $E$ implies.
\end{example}

\subsection{Exploiting Edge Orientations}
\label{edge-orient-subsect}

For some sample graphs $S$, the CQ's generated by the method of Section~\ref{cq-perm-subsect} will repeat some edge orientations.  If so, these CQ's can be combined if we replace the arithmetic conditions from each by the OR of those conditions.\footnote{In some cases, the OR of arithmetic conditions cannot be expressed in the form needed for a conjunctive query.  However, since we implement each CQ as a multiway join followed by a selection, and any selection condition, whether or not it is the AND of simple comparisons, can be implemented at the end of the Reduce function, we need not worry about the nature of the selection condition in what follows.}
In Example~\ref{sq-ordering-ex} there were only three CQ's for the square, each with a different edge orientation.

\begin{figure}[htfb]
\centerline{\includegraphics[width=0.37\textwidth]{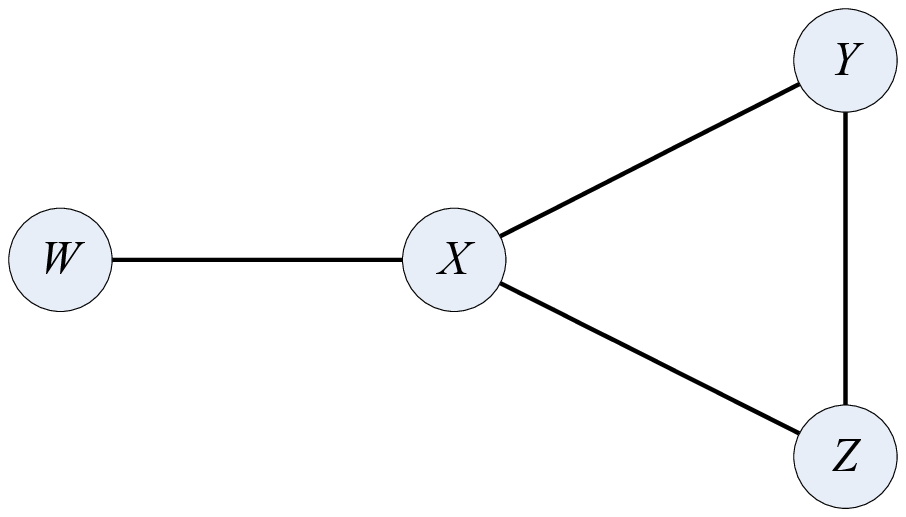}}
\caption{The lollipop}
\label{lol-fig}
\end{figure}

However, for other sample graphs, such as the ``lollipop,'' shown with names/variables for its nodes in Fig.~\ref{lol-fig}, edge orientation allows significant simplification.
Since the lollipop has four nodes, there are 24 orders, but its automorphism group has only two members: the identity and the mapping that swaps $Y$ with $Z$.  Thus, the quotient group has twelve members.  We can break the symmetry of the automorphisms by requiring $Y<Z$.  That inequality restricts an edge, so we would expect that there are eight orientations of the edges.  However, two of these orientations are impossible, since we cannot have both $Z<X$ and $X<Y$, or there would be a contradiction with $Y<Z$.  Thus, we expect only six CQ's suffice.

\begin{figure}[htfb]
{%\small
\begin{center}
\begin{tabular}{r l l}
 & Order & Conjunctive Query (Relational Subgoals Only)\\
\hline
1. & $W<X<Y<Z$ & $E(W,X)$ \& $E(X,Y)$ \& $E(X,Z)$ \& $E(Y,Z)$\\
2. & $W<Y<X<Z$ & $E(W,X)$ \& $E(Y,X)$ \& $E(X,Z)$ \& $E(Y,Z)$\\
3. & $W<Y<Z<X$ & $E(W,X)$ \& $E(Y,X)$ \& $E(Z,X)$ \& $E(Y,Z)$\\
4. & $X<W<Y<Z$ & $E(X,W)$ \& $E(X,Y)$ \& $E(X,Z)$ \& $E(Y,Z)$\\
5. & $Y<W<X<Z$ & $E(W,X)$ \& $E(Y,X)$ \& $E(X,Z)$ \& $E(Y,Z)$\\
6. & $Y<W<Z<X$ & $E(W,X)$ \& $E(Y,X)$ \& $E(Z,X)$ \& $E(Y,Z)$\\
7. & $X<Y<W<Z$ & $E(X,W)$ \& $E(X,Y)$ \& $E(X,Z)$ \& $E(Y,Z)$\\
8. & $Y<X<W<Z$ & $E(X,W)$ \& $E(Y,X)$ \& $E(X,Z)$ \& $E(Y,Z)$\\
9. & $Y<Z<W<X$ & $E(W,X)$ \& $E(Y,X)$ \& $E(Z,X)$ \& $E(Y,Z)$\\
10. & $X<Y<Z<W$ & $E(X,W)$ \& $E(X,Y)$ \& $E(X,Z)$ \& $E(Y,Z)$\\
11. & $Y<X<Z<W$ & $E(X,W)$ \& $E(Y,X)$ \& $E(X,Z)$ \& $E(Y,Z)$\\
12. & $Y<Z<X<W$ & $E(X,W)$ \& $E(Y,X)$ \& $E(Z,X)$ \& $E(Y,Z)$\\
\end{tabular}
\end{center}
}
\caption{Twelve CQ's for the lollipop, omitting the arithmetic comparisons}
\label{lol-cqs-fig}

\end{figure}

\begin{example}
\label{lol-ex}
In Fig.~\ref{lol-cqs-fig} we see the twelve CQ's that come from the twelve orders with $Y<Z$.
The three arithmetic subgoals that enforce the order for each CQ are omitted to save space.
Observe that all twelve have subgoal $E(Y,Z)$, as they must.  However, the twelve divide into six groups with identical relational subgoals.  These groups are summarized in Fig.~\ref{lol-groups-fig}.  The orientations are represented by listing the low end of each edge first.  For instance the first group corresponds to the edge orientation where $W<X$, $X<Y$, and $X<Z$.

\begin{figure}[htfb]

\begin{center}
\begin{tabular}{l l}
Orientation & CQ's\\
\hline
$WX,XY,XZ$ & 1\\
$WX,YX,XZ$ & 2, 5\\
$WX,YX,ZY$ & 3, 6, 9\\
$XW,XY,XZ$ & 4, 7, 10\\
$XW,YX,XZ$ & 8, 11\\
$XW,YX,ZY$ & 12\\
\end{tabular}
\end{center}

\caption{Grouping CQ's for the lollipop by edge orientation}
\label{lol-groups-fig}

\end{figure}

The first and last groups have only one CQ, so the first and twelfth CQ's are carried over intact.
The second group consists of CQ's (2) and (5).  Notice that their arithmetic conditions differ only in that (2) has $W<Y$ while (5) has $Y<W$.  The logical OR of the conditions thus replaces these two inequalities by $W\ne Y$.   Similarly , the fifth group $\{8,11\}$ has conditions that differ only in the order of $W$ and $Z$, so we replace the two inequalities on $W$ and $Z$ by $W\ne Z$ to obtain the logical OR.

Now consider the third group $\{3,6,9\}$.  The three orders of variables for these CQ's all have $Y<Z<X$ and $W<X$.  However, $W$ can appear anywhere in relation to $Y$ and $Z$.  The OR of the three conditions is thus $Y<Z$, $Z<X$, $W<X$, $W\ne Y$, and $W\ne Z$.  The fourth group, $\{4,7,10\}$ is handled similarly, except in this group $X$ is lowest rather than the highest of the variables.  Figure~\ref{lol-oriented-fig} shows the six resulting CQ's for the lollipop sample graph.
\end{example}

\begin{figure}[htfb]
{%\small
\begin{center}
\begin{tabular}{l}
$E(W,X)$ \& $E(X,Y)$ \& $E(X,Z)$ \& $E(Y,Z)$ \&\\
~~~~$W<X$ \& $X<Y$ \& $Y<Z$\\
$E(W,X)$ \& $E(Y,X)$ \& $E(X,Z)$ \& $E(Y,Z)$ \&\\
~~~~$W\ne Y$ \& $Y<X$ \& $X<Z$\\
$E(W,X)$ \& $E(Y,X)$ \& $E(Z,X)$ \& $E(Y,Z)$ \&\\
~~~~$W<X$ \& $Y<Z$ \& $Z<X$ \& $W\ne Y$ \& $W\ne Z$\\
$E(X,W)$ \& $E(X,Y)$ \& $E(X,Z)$ \& $E(Y,Z)$ \&\\
~~~~$X<W$ \& $X<Y$ \& $Y<Z$ \& $W\ne Y$ \& $W\ne Z$\\
$E(X,W)$ \& $E(Y,X)$ \& $E(X,Z)$ \& $E(Y,Z)$ \&\\
~~~~$Y<X$ \& $X<W$ \& $W\ne Z$\\
$E(X,W)$ \& $E(Y,X)$ \& $E(Z,X)$ \& $E(Y,Z)$ \&\\
~~~~$Y<Z$ \& $Z<X$ \& $X<W$\\
\end{tabular}
\end{center}
}
\caption{Six CQ's for the lollipop after combining CQ's with the same orientation}
\label{lol-oriented-fig}
\end{figure}

\section{Evaluation of CQ's with Optimal Communication Cost}
\label{opt-eval-sect}

We can apply the method of \cite{AU10} to evaluate each of the CQ's for a sample graph $S$ optimally as regards the communication cost.  There are three broad approaches to doing so:

\begin{enumerate}
\item {\em CQ-Oriented Processing}. Perform a separate join for each CQ.  This approach never dominates the others, but we shall begin our discussion of evaluation by focusing on a single CQ in Section~\ref{1-cq-opt-subsect}.

\item {\em Variable-Oriented Processing}. Treat all the CQ's as if they were a single join of the relations for the edges of $S$.  More precisely, if the subgoal $E(X,Y)$ appears in each CQ for $S$, then the relation for the edge $(X,Y)$ is $E$.  However, if both $E(X,Y)$ and $E(Y,X)$ appear among different CQ's for $S$, then the relation for the edge $(X,Y)$ is two copies of $E$, one with the attributes in the order $(X,Y)$ and the other in the order $(Y,X)$.  In that case, the relation for the edge $(X,Y)$ is twice as large as it would be if the edge appeared in only one orientation among all the CQ's.  Note that the reducers still evaluate each of the CQ's separately, although there might be some common subexpressions that can simplify the work.

\item {\em Bucket-Oriented Processing}. Here, we use the same number of buckets for each of the variables in each CQ. For each nondecreasing sequence of bucket numbers, evaluate each CQ using the edges whose ends are in buckets that appear in the sequence.
\end{enumerate}
One might suppose that there is a fourth approach, where we treat $E$ as a relation of undirected edges and include both $E(a,b)$ and $E(b,a)$.  We then take a single multiway join and eliminate duplicate copies of instances of $S$ by enforcing some constraints on the order of nodes in the instance.  However, it is easy to see that this approach is never superior to the variable-oriented method and can be worse.

%We shall begin by applying the first approach and defer the second to Section~\ref{combine-cq-subsect}.

\subsection{Optimization of Single CQ's}
\label{1-cq-opt-subsect}

To review \cite{AU10} and \cite{AU11-Datalog2.0}, the way to optimize the map-reduce evaluation of a CQ is:

\begin{itemize}

\item
For each variable of the CQ $X$, there is a {\em share} $x$ that is the number of buckets into which values of $X$ are hashed.

\item
Each reducer is identified by a list of bucket numbers, one for each variable of the CQ, in a fixed order.

\item
The communication cost for evaluating the CQ is a sum of terms, one for each relational subgoal of the CQ.  This term is the product of the size of the relation for that subgoal and all the shares of the variables that {\em do not} appear in that subgoal.

\item
The minimum value of this sum occurs when the sums of certain subsets are all equal.  There is one subset for each share, and that subset consists of all the terms in which that share appears.

\end{itemize}

\begin{example}
\label{lol-opt-ex}
The first of the six CQ's in Fig.~\ref{lol-oriented-fig} is

\begin{verbatim}
     E(W,X) & E(X,Y) & E(X,Z) & E(Y,Z)
\end{verbatim}
We have dropped the arithmetic comparisons.  They will be implemented by a selection after performing the join, at the same reducers that produce the join, so they have no effect on the communication cost.  There are four shares $w$, $x$, $y$, and $z$, corresponding to the four variables of the CQ.  However, a theorem of \cite{AU10} says that when one variable is {\em dominated} by another (the first variable appears only in terms where the second appears, then the share of the first may be taken as 1; i.e., the dominated variable may be ignored when determining the reducers to which a tuple is sent.  As $W$ appears only where $X$ appears, we shall assume $w=1$ and drop share $w$ from formulas.

All four terms have the same relation $E$, so we shall use $e$ as the size of $E$.  Thus, all terms in the expression for communication cost will have $e$ as a factor.  The terms for our example CQ are thus:
$$eyz + ez + ey + ex$$
In explanation, the first term $eyz$ comes from subgoal $E(W,X)$.  It consists of the size $e$ times the shares of the variables $Y$ and $Z$ that do not appear in the subgoal.  The second term $ez$ comes from the subgoal $E(X,Y)$.  The missing variables are $W$ and $Z$, but recall that the dominated $W$ has $w=1$ so there is no factor $w$ needed.  The last two terms are derived from the last two subgoals in a similar manner.

Now, we must derive the subsets of terms that are required to be equal at the optimum point.  The share $x$ is present only in the last term, so that term is one of the sums.  Share $y$ is present in the first and third terms, so another subset is $eyz+ey$.  Share $z$ appears in the first two terms, so the last subset is $eyz+ez$.  The minimum communication cost thus occurs when
$$ex = eyz+ey = eyz+ez$$

From the above equalities, we can deduce $z=y$ and $x=y^2+y$.  Since the number of reducers is the product of all the shares, we now can pick a value of $y$ and know completely how to replicate edges to evaluate the CQ.  For instance, pick $y=5$.  Then $x=30$, $z=5$, and there are $xyz=750$ reducers.  Each edge is replicated as a tuple for the first subgoal $E(W,X)$ to $yz=25$ reducers.  It is replicated 5 times as a tuple for the each of the second and third subgoals, and it is replicated 30 times as a tuple for the last subgoal, a total of 65 times.
\end{example}

\subsection{Regular Sample Graphs}
\label{eq-deg-subsect}

For regular sample graphs, the optimum way to assign shares is much simpler than what we saw in Example~\ref{lol-opt-ex}.

\begin{theorem}
\label{shares-th}
If subgaph $S$ has all nodes of degree $d$, there are $p$ nodes, and we wish to use $k$ reducers in the map-reduce evaluation of one of the CQ's for $S$, then each node gets a share $\sqrt[p]{k}$ in the optimum assignment of shares.
\end{theorem}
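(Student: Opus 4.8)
The plan is to read off the communication-cost expression for a CQ of $S$ exactly as in Section~\ref{1-cq-opt-subsect}, and then check that the fully symmetric assignment of shares satisfies the optimality conditions recalled there from \cite{AU10}; because $S$ is $d$-regular, that check is a one-line counting argument. The only point that needs care is to argue that the ``equal subset sums'' condition, together with the constraint that the shares multiply to $k$, really identifies the global minimum and not just a critical point.

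Set things up as follows. Let $e$ denote the size of the edge relation $E$, and for each node $v$ of $S$ let $x_v$ be its share. The CQ for $S$ has one subgoal $E(\cdot,\cdot)$ per edge of $S$, and the term that the edge $\{i,j\}$ contributes to the communication cost is $e\prod_{v\notin\{i,j\}} x_v$ (the relation size times the shares of the variables \emph{not} in that subgoal). Thus we must minimize
$$C(x_1,\dots,x_p)\;=\;e\sum_{\{i,j\}\in E(S)}\;\prod_{v\notin\{i,j\}} x_v$$
subject to $\prod_{v} x_v = k$ and $x_v\ge 1$. By the characterization from Section~\ref{1-cq-opt-subsect}, at the optimum the quantity $S_t$, defined as the sum of those terms in which the share $x_t$ appears, is the same for every node $t$. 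A term for the edge $\{i,j\}$ contains $x_t$ precisely when $t\notin\{i,j\}$, i.e.\ precisely when $\{i,j\}$ is not incident to $t$; so $S_t$ is the sum of the edge-terms for the edges of $S$ that miss $t$.

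Now plug in the symmetric candidate $x_v = c$ for all $v$, with $c=\sqrt[p]{k}$. This is a legal assignment ($k\ge1$ gives $c\ge1$) and it meets the constraint, since $\prod_v x_v=c^p=k$. Every edge-term is then $e\,c^{p-2}$, and since $S$ is $d$-regular each node $t$ is incident to exactly $d$ of the $pd/2$ edges of $S$; hence $S_t=(pd/2-d)\,e\,c^{p-2}$, independently of $t$. So all the optimality equalities hold at $x_v=\sqrt[p]{k}$. (For $d\ge2$ no variable of the CQ is dominated by another, so the reduction used in Example~\ref{lol-opt-ex} does not even arise here, and all $p$ shares genuinely participate.)

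Finally, to upgrade this critical point to a genuine minimum, pass to logarithmic coordinates $y_v=\ln x_v$: the constraint becomes the affine equation $\sum_v y_v=\ln k$, each edge-term becomes $e\exp\big(\sum_{v\notin\{i,j\}} y_v\big)$, which is convex in $y$, and a sum of convex functions is convex; hence $C$ is convex on the constraint hyperplane, so any point meeting the first-order (Lagrange) conditions there is automatically a global minimum. The main obstacle in writing this up is exactly this last step — making the passage from the \emph{necessary} ``equal subset sums'' condition to \emph{sufficiency} airtight — and the convexity observation (or a direct appeal to the optimality result of \cite{AU10}) is what settles it.
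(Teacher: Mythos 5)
Your proof is correct and follows essentially the same route as the paper's: write the cost as one term of the form $e\prod_{v\notin\{i,j\}}x_v$ per edge, note that $d$-regularity makes each node appear in exactly $pd/2-d$ terms so the symmetric assignment satisfies all the ``equal subset sums'' optimality conditions, and then use $\prod_v x_v=k$ to get $\sqrt[p]{k}$. The only difference is that you additionally verify sufficiency via convexity in logarithmic coordinates, a step the paper leaves implicit by appealing to the optimality characterization of \cite{AU10}; that addition is sound and makes the argument more self-contained.
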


\begin{proof}
If $e$ is the number of edges in the graph to which the CQ is applied, then the expression for communication cost consists of $pd/2$ terms, one for each edge of $S$.  The term for an edge is the product of $e$ and the shares for the $p-2$ nodes that are not ends of the edge.
Each node appears in $\frac12 pd-d$ of these terms, so the conditions for optimality are the equalities among $p$ expressions, each of which is the sum of $\frac12 pd-d$ terms.  Moreover, each of these terms is the product of $e$ and $p-2$ different shares.
Thus, all equalities are satisfied when all the shares are the same.
Since the product of the shares must be $k$, the number of reducers, it follows that the minimum communication cost is obtained when all shares are $\sqrt[p]{k}$.
\end{proof}

Theorem~\ref{shares-th} applies to many interesting sample graphs,
including all cycles, all complete graphs, and hypercubes of any dimension.
An important consequence of this theorem is that the hashed-based ordering of nodes that we used in Section~\ref{improved-subsect} for triangles can be exploited for any regular sample graph.
If the sample graph has $p$ nodes, then many of the reducers will get no instances of the sample graph and need not be executed.  That effect in turn lowers the replication of edges significantly.

\begin{theorem}
\label{order-buckets-th}
Suppose $S$ is a sample graph of $p$ nodes, and $Q$ is a CQ that generates instances of $S$ that have a particular order for the nodes of $S$.
If each variable of $Q$ has share $b$, the same hash function $h$ is used for each variable, and the order of nodes is determined first by its hash value and then by the identifier of the node to break ties, then the number of reducers that need to be executed is $\binom{b+p-1}{p}$.
\end{theorem}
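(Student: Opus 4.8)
The plan is to show that only the reducers whose bucket list is nondecreasing can ever receive a complete instance of $S$, and then to count those lists. First I would recall how $Q$ is built. By the construction of Section~\ref{cq-from-order-subsect}, a CQ $Q$ aimed at a fixed ordering of the nodes of $S$ carries the arithmetic subgoals $X_i<X_{i+1}$ for a labeling $X_1,\ldots,X_p$ of the $p$ nodes, and these chain together to force $X_1<X_2<\cdots<X_p$ in every instance matched by $Q$. (The hypothesis that ``$Q$ generates instances with a particular order for the nodes of $S$'' is exactly the case in which such a total order is present; the combined CQ's of Section~\ref{edge-orient-subsect} are not the subject here.)

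Next I would exploit the hash-first node ordering. Since a data-graph node $u$ is treated as the pair consisting of $h(u)$ followed by $u$, we have $u<v\Rightarrow h(u)\le h(v)$. Hence if data-graph nodes $u_1,\ldots,u_p$ are substituted for $X_1,\ldots,X_p$ so as to satisfy $Q$ --- in particular $u_1<u_2<\cdots<u_p$ --- then $h(u_1)\le h(u_2)\le\cdots\le h(u_p)$. The reducer responsible for this instance is the one whose bucket list is $\bigl(h(u_1),\ldots,h(u_p)\bigr)$, which is therefore nondecreasing. Consequently every reducer whose $p$-tuple of bucket numbers fails to be nondecreasing receives no complete instance of $S$ and need not be executed; it suffices to run only the reducers indexed by nondecreasing lists $1\le b_1\le b_2\le\cdots\le b_p\le b$.

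Finally I would count these lists. They are precisely the size-$p$ multisets drawn from a $b$-element set; via the substitution $b_i\mapsto b_i+(i-1)$ they biject with the strictly increasing lists $1\le c_1<c_2<\cdots<c_p\le b+p-1$, i.e.\ with the $p$-element subsets of a $(b+p-1)$-element set, of which there are $\binom{b+p-1}{p}$. For $p=3$ this recovers the count $\binom{b+2}{3}$ and the binary-string bijection already used in Section~\ref{improved-subsect}. I expect no real obstacle: the only point needing care is the first step, verifying that the arithmetic part of $Q$ genuinely pins down a total order, so that $h(u_1)\le\cdots\le h(u_p)$ follows; once that is in hand, the statement reduces to the elementary multiset count above. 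If desired, one can add a remark that the bound is tight in the sense that for a sufficiently rich data graph every nondecreasing bucket list does receive an instance, but that is not needed for the theorem as stated.
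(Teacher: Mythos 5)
Your proposal is correct and follows essentially the same route as the paper: both arguments observe that the hash-first node ordering forces the bucket list of any instance satisfying $Q$'s total order to be nondecreasing, and then count the nondecreasing lists as size-$p$ multisets from $\{1,\ldots,b\}$. The only cosmetic difference is the counting bijection (your shift $b_i\mapsto b_i+(i-1)$ to strictly increasing lists versus the paper's encoding by binary strings), which are two standard phrasings of the same stars-and-bars argument.
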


\begin{proof}
As in Section~\ref{improved-subsect}, we can count the number of useful reducers by comparing them to certain binary strings.  First, observe that since $Q$ generates only instances of $S$ in which a particular order of the variables holds, the buckets for those variables must form a nondecreasing sequence.  Wlog we can assume the order of the variables is
$X_1<X_2<\cdots<X_p$
and that the identifier for a reducer is the list
$[h(X_1),h(X_2),\ldots,h(X_p)]$.

Since $h(X_1)\le h(X_2)\le\cdots\le h(X_p)$, the number of useful reducers is equal to the number of sequences of integers
$$1\le i_1\le i_2\le\cdots\le i_p\le b$$
These sequences are in 1-1 correspondence with the number of strings of 0's and 1's with $p-1$ 0's and $b$ 1's.  Specifically, we can identify a sequence of integers $1\le i_1\le i_2\le\cdots\le i_p\le b$ with the string where the $j$th 1 is in position $i_j+j-1$.

We must prove that no two sequences yield the same string and no two strings correspond to the same sequences.  Suppose two sequences differ first in position $i$.  Then the corresponding strings must have their $i$th 1's in different positions and therefore are different strings.  Conversely, suppose two strings agree in the positions of their first $i-1$ 1's, but disagree on the positions of their $i$th 1's.  Then the corresponding sequences differ in their $i$th components.  Finally, the maximum position that can hold a 1 is when $i_p=b$, in which case the position is $p+b-1$.  Thus, the correspondence is 1-1.  Since there are $\binom{p+b-1}{p}$ strings, that is also the number of reducers needed.
\end{proof}

\subsection{Variable-Oriented Processing}
\label{combine-cq-subsect}

%There is an option to evaluate several CQ's using the same set of reducers.  When we do, we lose the option to apply Theorem~\ref{order-buckets-th}, because there is no particular order to the variables of the CQ's.\footnote{It is possible to extend this theorem to the case where different variables hash to different numbers of buckets, but the count of useful reducers becomes very hard to express in general, since there can be partial overlaps among the nodes in the buckets for different variables.}
%Moreover, since some of the edges may appear among the CQ's in both orientations, the subgoals may have effective relation sizes that are not uniform, even though they are each based on the same set of undirected edges.
%

In this approach, we treat all the CQ's as if they were one.  That is, there is a reducer for each list of buckets, one for each variable.  The number of buckets for different variables may differ.

Each CQ for a sample graph $S$ has one subgoal for each undirected edge.  It may be that for some of these edges, the orientation is the same in each CQ.  For these subgoals, each edge is communicated from mappers to reducers in only one orientation, while for the other subgoals, each edge must be communicated in both orientations.  The effect is that if $e$ is the number of undirected edges, the relation size for a subgoal whose edge appears in both orientations among the CQ's is $2e$ rather than $e$.
This change makes it somewhat harder to find the optimum shares, and in particular, Theorem~\ref{order-buckets-th} cannot be applied.  However, the minimum communication cost can still be obtained with care.

\begin{example}
\label{sq-cc-ex}
Consider the three CQ's (with arithmetic subgoals omitted)

\begin{verbatim}
     E(W,X) & E(X,Y) & E(Y,Z) & E(W,Z)
     E(W,X) & E(Y,X) & E(Y,Z) & E(W,Z)
     E(W,X) & E(X,Y) & E(Z,Y) & E(W,Z)
\end{verbatim}
derived for the square in Example~\ref{sq-ordering-ex}.
In these CQ's, the edges $(W,X)$ and $(W,Z)$ appear in only one orientation each, while the other two edges appear in both orientations.  The expression for the communication cost is thus
$$eyz + 2ewz + 2ewx + exy$$
where we conventionally use the corresponding lower-case letter to represent the share for a variable of the CQ.
To solve for the shares, we must satisfy the equalities
$$2ewz+2ewx = 2ewx+exy = eyz+exy = eyz+2ewz$$
Interestingly, these equations do not provide unique values for the shares, even under the constraint that $wxyz=k$, where $k$ is the desired number of reducers.  However, we can derive the simple equalities $x=z$ and $y=2w$.  We are free to select values for the shares within these constraints; any choice will provide the same, optimum communication cost, even though the shares themselves differ.  For instance, a simple choice would be $x=z=1$, $w=\sqrt{k/2}$, and $y=\sqrt{2k}$.  With that choice (or any other choice that satisfies the needed $x=z$, $y=2w$, and $wxyz=k$), the communication cost per edge is $4\sqrt{2k}$.
\end{example}

\subsubsection{Variable-Oriented Processing for Regular Sample Graphs}
\label{combine-app-regular}

Next, we show how the above approach applies to special cases of
regular graphs. For two families of regular sample graphs, the idea illustrated in Example~\ref{sq-cc-ex} can be generalized.

\begin{theorem}\label{shares2-th}
Let $S$ be regular and have $p$ nodes.  Each edge of $S$ is either {\em bidirectional} (it appears in both directions in some CQ for $S$) or {\em unidirectional} (it appears in only one direction among these CQ's).
Suppose it is possible to partition the nodes of $S$ into two sets $S_1$ and $S_2$ such that one of following two cases holds:

\begin{itemize}
\item[(a)]
Every bidirectional edge runs between nodes of $S_1$ and the
unidirectional edges run between an $S_1$ and an $S_2$ node, or

\item[(b)]
Every bidirectional edge runs between an $S_1$ and an $S_2$ node, and
every unidirectional edge runs between $S_2$ nodes.
\end{itemize}

Then in either case, the shares for the nodes in $S_1$ are all equal, the shares of the nodes in $S_2$ are all equal, and the shares of the nodes in $S_1$ are twice the shares of the nodes in $S_2$.
\end{theorem}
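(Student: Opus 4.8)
The plan is to recast the problem in the variable-oriented cost framework of Section~\ref{1-cq-opt-subsect} and then exhibit shares of the claimed shape that meet the optimality conditions of \cite{AU10}. In that framework the communication cost is a sum over the undirected edges of $S$, the term for an edge $(X,Y)$ being the size of its relation ($e$ if the edge is unidirectional, $2e$ if bidirectional, where $e$ is the number of undirected edges of $G$) times the product of the shares of the $p-2$ nodes of $S$ that are not ends of $(X,Y)$. The optimum is attained when, for every variable $v$, the sum of the terms in which $v$ occurs takes the same value; since that sum equals the total cost minus the sum of the terms for the edges incident to $v$, this is equivalent to demanding that the latter be the same for every node $v$. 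Because $S$ is $d$-regular each node is incident to exactly $d$ edges, which is what makes this incident-edge form convenient.

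I would then fix $p_1=|S_1|$, $p_2=|S_2|$, assign share $a$ to every node of $S_1$ and share $c$ to every node of $S_2$, and posit $a=2c$, the common value being pinned down afterwards by $a^{p_1}c^{p_2}=k$. In case~(a), since every edge of $S$ is bidirectional or unidirectional, the bidirectional ones lie inside $S_1$, and the unidirectional ones cross between $S_1$ and $S_2$, no edge lies inside $S_2$; an edge inside $S_1$ then contributes the term $2e\,a^{p_1-2}c^{p_2}$ and a crossing edge contributes $e\,a^{p_1-1}c^{p_2-1}$. The one identity that matters is that $a=2c$ forces $2\,a^{p_1-2}c^{p_2}=a^{p_1-1}c^{p_2-1}$, so, weighting by the coefficients $2$ and $1$, each edge incident to a node contributes the \emph{same} amount whatever its type; hence an $S_1$-node (with any mixture of incident bidirectional and unidirectional edges) and an $S_2$-node (whose $d$ incident edges are all unidirectional) both have incident-edge sum $d\,e\,a^{p_1-1}c^{p_2-1}$, so the optimality conditions hold. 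Case~(b) is handled identically: now no edge lies inside $S_1$, a crossing edge contributes $2e\,a^{p_1-1}c^{p_2-1}$ and an edge inside $S_2$ contributes $e\,a^{p_1}c^{p_2-2}$, and $a=2c$ forces $a^{p_1}c^{p_2-2}=2\,a^{p_1-1}c^{p_2-1}$, so again every incident edge contributes equally and both an $S_1$-node and an $S_2$-node have incident-edge sum $2d\,e\,a^{p_1-1}c^{p_2-1}$.

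I do not anticipate a real obstacle once the reduction to incident-edge sums is in place; the remainder is a one-line comparison of exponents in each case. The one subtle point is that the hypothesis does not assert that all $S_1$-nodes have the same number of incident bidirectional edges, so one cannot reason node-by-node naively — but the factor-of-two identity above is precisely what makes that count irrelevant, and this is the conceptual heart of the theorem and the reason for the ``twice'' in its statement. It is also worth remarking, as Example~\ref{sq-cc-ex} already illustrates, that these optimality conditions do not determine the shares uniquely: only the equality within $S_1$, the equality within $S_2$, and the $2:1$ ratio between them are forced, so the statement should be read as exhibiting one optimal family of share assignments rather than the only one.
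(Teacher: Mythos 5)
Your proof is correct and follows essentially the same route as the paper's: both arguments verify that under the $2{:}1$ share ratio every edge's term in the cost expression takes the same value (you via the exponent comparison $2a^{p_1-2}c^{p_2}=a^{p_1-1}c^{p_2-1}$, the paper via the normalized form $1/(ab)$ versus $2/(ab)$), and then invoke $d$-regularity to conclude that all the per-node sums in the optimality conditions coincide. Your closing observation that the optimality conditions need not pin down the shares uniquely (cf.\ Example~\ref{sq-cc-ex}) is accurate, though it goes beyond what the paper's proof addresses.
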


\begin{proof}
The proof is almost the same as that for Theorem~\ref{shares-th}, with the exception that the terms for the bidirectional edges have an additional factor 2, because their relations are twice the size of the relations for the unidirectional edges.  If we divide all terms by $e$ (the size of the edge relation) and also divide by $k$, the product of all the shares of all the nodes, then the term for an edge between nodes $A$ and $B$ becomes either $1/ab$ or $2/ab$, where $a$ and $b$ are the shares for nodes $A$ and $B$, respectively.  The numerator is 1 if the edge is unidirectional and 2 if it is bidirectional.  It is easy to check that in both cases the product $ab$ is twice as large if edge $(A,B)$ is bidirectional than it is if $(A,B)$ is unidirectional.
Thus, all terms contribute the same to the sum for each edge.
Since $S$ is regular, each sum has the same number of terms so all these sums are equal.
\end{proof}

\begin{example}
\label{shares2-ex}
Consider the cycle $C_p$ with nodes
$$X_1,X_2,\ldots,X_p$$
in that order (as in Fig.~\ref{cycle-fig}).
Assume CQ's are selected in the standard way, so $X_1<X_2$, $X_1<X_p$, and $X_2<X_p$ to break the automorphisms.  As a result, the only unidirectional edges are $(X_1,X_2)$ and $(X_1,X_p)$.  Let $S_2 = \{X_1\}$ and let $S_1$ be all the other nodes.
Then we have an example of case~(a) in Theorem~\ref{shares2-th}, and we can conclude that the optimum simultaneous evaluation of all the CQ's for the cycle of length $p$ gives equal shares to all the nodes except $X_1$, which gets a half share.

For a concrete example, assume $p=6$ and let the size of the data graph be $m=10^9$ edges. If we use $k$ = 500,000 reducers, then $X_1$ gets share 5 and the other five nodes get share 10 each.  The edges are replicated 10,000 times for each of the terms $E(X_2,X_3)$, $E(X_3,X_4)$, $E(X_4,X_5)$, and $E(X_5,X_6)$ (or the same terms with argument order reversed), while the edges are replicated 5000 times for the terms $E(X_1,X_2)$ and $E(X_1,X_6)$.  Thus, the total communication from mappers to reducers is $5\times10^{13}$.
Each reducer gets $10^8$ edges to deal with.
\end{example}

\subsubsection{More Details on Variable-Oriented Processing}
\label{examples-combine-subsec}

In this section, we first explain in detail how the techniques of \cite{AU10} are
applied in the general case to minimize the communication cost, and then
we focus again in regular graphs and give some more special cases with examples.

In the general case we use the techniques of \cite{AU10} as follows. We will build a number of Lagrangian equations to solve, where each equation equates any two
sums from a collection of sums. Each of the sums in the collection is constructed by
considering a node of sample graph $S$ and producing one term
for each edge adjacent to this node. The term for
the edge $(X,Y)$ is $1/(xy)$ or $2/(xy)$ depending
whether we will use both orientations of this edge or not
(where $x,y$ are the shares of the attributes $X,Y$ respectively).
Now we will write the equations for the general case of any sample
graph $S$ with $p$ nodes and then consider special cases  when the sample graph
$S$ is regular with degree $d$.

It is a convenience in the calculations if we view the edges
of $S$ that are to be used in both orientations to form a subgraph $H$ of $S$
and then write the equations separately for nodes that have all their adjacent
edges in $H$ or not.
Thus we define set $S_1$ whose nodes are adjacent only to edges in $H$
set $S_2$, whose nodes are
adjacent to both kinds of edges and set $S_3$ whose adjacent to only edges outside $H$. The sets of nodes  $S_1,S_2$ and $S_3$ is a partition of the nodes of $G$.

For node $i$  in $S_1$ we denote its share by $a_i$, for node $i$ in $S_3$ we denote its share by $b_i$ and for node $i$ in $S_2$ we denote its share by $z_i$. After writing the sums in the
collection we will investigate under what conditions all $a_i$'s  are equal to $a$ and all $b_i$'s are equal to $b$ and all $z_i$'s are equal to $z$. There are three kinds of sums:

\begin{itemize}
\item
The node $i$ is in $S_1$.  Then suppose $z_{i_j}, j=1,2,\ldots$ is the share of nodes adjacent to $i$ in $S_2$ and $a_{i_j}, j=1,2,\ldots$
is the share of nodes adjacent to $i$ in $S_1$ -- for simplicity we abuse notation using subscript $i_j$ for enumerating both
nodes in $S_2$ and $S_1$. Then the sum for this node is:

$$\frac{2}{a_ia_{i_1}}+ \frac{2}{a_ia_{i_2}} + \cdots + \frac{2}{a_iz_{i_1}}+ \frac{2}{a_iz_{i_2}} + \cdots $$
\item The node $i$ is in $S_3$. Then almost
 symmetrically with the first kind above, the sum for this node is:

$$\frac{1}{b_ib_{i_1}}+ \frac{1}{b_ib_{i_2}} + \cdots + \frac{1}{b_iz_{i_1}}+ \frac{1}{b_iz_{i_2}} + \cdots $$
\item The
 node $i$ is in $S_2$.  Then it has all three kinds of adjacent nodes. Thus we
denote by $z_{i_j}, j=1,2,\ldots$  the share of nodes adjacent to $i$ in $S_2$, by  $a_{i_j}, j=1,2,\ldots$ the share of nodes adjacent to $i$ in $S_1$ and by $b_{i_j}, j=1,2,\ldots$ the share of nodes adjacent to $i$ in $S_3$. The sum for this node is:

$$\frac{2}{z_ia_{i_1}}+ \frac{2}{z_ia_{i_2}} + \cdots + \frac{1}{z_ib_{i_1}}+ \frac{1}{z_ib_{i_2}} + \cdots	+\Sigma \frac{e}{z_iz_{i_1}}$$
where $e$ takes the value 2 or 1 depending whether it is an edge taken in both
orientations or only in one orientation.
\end{itemize}
Now suppose  that the sample graph is regular with degree equal to $d$ and that all $a_i$'s  are equal to $a$ and all $b_i$'s are equal to $b$ and all $z_i$'s are equal to $z$. Then the three kinds of the above sums turn into:

\begin{itemize}
\item The node $i$ is in $S_1$. Then let $d'$ be the number of nodes in $S_1$ that are adjacent to node $i$. Obviously $d-d'$ is the number of nodes in $S_2$ that are adjacent to node $i$ and there are no nodes in $S_3$ adjacent to $i$ because all edges adjacent to $i$ are taken in both directions. Then the sum is: $$\frac{2d'}{a^2}+\frac{2 (d-d')}{az}$$

\item The node $i$ is in $S_3$. Then let $d''$ be the number of nodes in $S_3$ that are adjacent to node $i$. Obviously $d-d''$ is the number of nodes in $S_2$ that are adjacent to node $i$ and there are no nodes in $S_1$ adjacent to $i$ because all edges adjacent to $i$ are taken in both directions. Then the sum is: $$\frac{d''}{b^2}+ \frac{d-d''}{bz} $$

\item The node $i$ is in $S_2$. Now node $i$ can be adjacent to any of the three kinds of nodes. Let  $d_{11}$ nodes in $S_1$ be adjacent to node $i$, $d_{12}$ nodes in $S_3$ be adjacent to node $i$. The sum is:
    $$\frac{2d_{11}}{za}+ \frac{d_{12}}{zb}+ \frac{e}{z^2}$$
    where $d-d_{11}-d_{12}\leq e \leq 2(d-d_{11}-d_{12})$.
\end{itemize}

Obviously the $d'$, $d''$, $d_{11}$, $d_{12}$ and $e$ mentioned above may be
different for different nodes $i$ (we should have used a subscript $i$ but
for simplicity and since we will mostly focus later in cases
where they are not be dependant on the node
$i$, we dropped the subscript). We first show below that either at least two of
the $a,b,z$ are integer
multiples of each other or $d'$ and $d''$ do not
depend on the particular node $i$.
We proceed under the assumption that  $d'$ and $d''$ do not
depend on the particular node $i$.
Then we argue how the $d_{11}$, $d_{12}$ and $e$
are related with each other and then we consider special cases where
the $d_{11}$, $d_{12}$ and $e$ do not depend on the node $i$ either.

\begin{itemize}
\item Let us equate the two sums for arbitrary pair of nodes $i$ and $j$ in $S_1$:
$$\frac{2d_i'}{a^2}+\frac{2 (d-d_i')}{az}=\frac{2d_j'}{a^2}+\frac{2 (d-d_j')}{az}$$
Then the above equation implies that:
$$(d_i'-d_j')z=-a(d_i'-d_j')$$
Hence either $z=a$ or $d_i'=d_j'$.
\item Symmetrically, we conclude that either $z=b$ or $d''$
is independent of the node $i$.
Before we proceed to the third bullet, let us talk about the options offered so far.
We have four choices: a) $a=b=z$; this is not possible because then one sum is
equal to $2d/a^2$ and the other equal to $d/a^2$. b) Both $d'$ and $d''$ are independent
of node $i$; we will discuss about it in the rest of this subsection. c) $a=z$ and $d''$ is
independent of $i$ or $b=z$ and $d'$ is independent of $i$. In this case we can use
the sum just above and to solve for $b/a$ when equating it to $2d/a^2$ which is
what the sum in the first bullet gives for $a=z$. This will give us $b/a$ as a
function of $d$ and $d''$. Then we can use the equation that says that the
product of all shares is equal to $k$ to solve for $a,b,z$. This solution however
will be valid only in the case it computes every sum in the third bullet
below to $2d/a^2$.

We continue assuming that both $d'$ and $d''$ are independent
of node $i$.
\item Let us equate the two sums for arbitrary pair of nodes $i$ and $j$  in $S_2$:
$$\frac{2d_{i11}}{za}+ \frac{d_{i12}}{zb}+ \frac{e_i}{z^2}=
\frac{2d_{j11}}{za}+ \frac{d_{j12}}{zb}+ \frac{e_j}{z^2}$$
The above implies:
$$\frac{2}{a}(d_{i11}-d_{j11}) + \frac{1}{b}(d_{i12}-d_{j12}) +\frac{1}{z}(e_i-e_j)=0$$
The above means that for every pair of nodes the difference $e_i-e_j$ is a linear
combination of the differences $d_{i11}-d_{j11}$ and $(d_{i12}-d_{j12}$
\end{itemize}

Special cases of the third bullet above can be considered:
\begin{enumerate}
\item
Suppose $e_i=e_j=e$ for all pairs. This means that $S_2$ is an independent set.
This means that for any pair of nodes $i,j$ we have:
$$\frac{2b}{a}=- \frac{(d_{i12}-d_{j12})}{(d_{i11}-d_{j11}) }$$
\item Suppose $e_i=e_j=e=0$. Then  $d_{i12}=d-d_{i11}$ for all $i$.
Then either $a=2b$ or $d_{11}$ is independent of the node $i$.
If we take $a=2b$ this leads to impossibility easily.
Thus we assume $d_{11}$ is independent of the node $i$ and the sum of the third kind
is:
$$\frac{2d_{11}}{za}+ \frac{d-d_{11}}{zb}$$
Now we have only three sums in the collection. Thus we solve for $a,b,z$
remembering also that $abz=k$, where $k$ is the number of reducers.

First we express $z$ as a function of $a$ and $b$
by considering the sum for nodes in $S_3$ and the sum for nodes in $S_2$.
$$\frac{1}{z}(\frac{2d_{11}}{a}+\frac{d-d_{11}}{b}-\frac{d-d''}{b})=\frac{d''}{b^2}$$
hence
$$z= (\frac{2d_{11}}{a}+\frac{d-d_{11}}{b}-\frac{d-d''}{b})\frac{b^2}{d''}$$
We can also express $z$ using the sum for $S_1$ and the sum of $S_3$:
$$\frac{1}{z}(\frac{2 (d-d')}{a}-\frac{d-d''}{b})=\frac{d''}{b^2}-\frac{2d'}{a^2}$$
Multiplying the two last equations we have a quadratic equation to express $1/a$ in terms of $b$:

$$(\frac{2 (d-d')}{a}-\frac{d-d''}{b})=(\frac{2d_{11}}{a}+\frac{d-d_{11}}{b}-\frac{d-d''}{b})\frac{b^2}{d''}(\frac{d''}{b^2}-\frac{2d'}{a^2})$$
\end{enumerate}
We give closed forms for two examples of the last case above.
%%Foto: the rest should be ok but I will re-check later

\begin{example}
For an example if $d'=d''=d_{11}=\frac{d}{2}$ then we have:

$$(\frac{2 }{a}-\frac{1}{b})=(\frac{2}{a}+\frac{1}{b}-\frac{1}{b})b^2(\frac{1}{b^2}-\frac{2}{a^2})$$
We solve the above and get $\frac{a}{b}=2^{\frac{1}{3}}$. And taking into account the equation for $z$ we get: $z=b2^{\frac{2}{3}}$.
Putting $a^{s_1}b^{s_3}z^{s_2}=k$ (where $k$ is the number of reducers, and $s_1$, $s_2$, and $s_3$ are the cardinalities of the
sets  $S_1,S_2$ and $S_3$ respectively), we get $$b=k^{\frac{1}{s_1+s_2+s_3}}2^{-\frac{s_1+2s_2}{3(s_1+s_2+s_3)}}$$
Thus the replication per input tuple is (where $p$ is the number of nodes in sample graph $S$):

$$k\frac{p(d/2)(1+\frac{1}{2^{2/3}})2^{\frac{2(s_1+2s_2)}{3(s_1+s_2+s_3)}}}{k^{\frac{2}{s_1+s_2+s_3}}}$$

or, since $s_1+s_2+s_3=p$

$$k\frac{p(d/2)(1+\frac{1}{2^{2/3}})2^{\frac{2(p-s_3)}{3p}}}{k^{\frac{2}{p}}}$$

or by simplifying

$$kpd\frac{1+2^{2/3}}{2^{1+\frac{2s_3}{3p}}k^{\frac{2}{p}}}~~~~~~~~~~~~~~~~~~~Eq. (2)$$
\end{example}

\begin{example}
For another example let us assume that  the set $S_2$ is an independent set and also
is such that each edge
of the graph $S$ contains a node in $S_2$. Then the solution
that optimizes the communication cost is the following.
The nodes
in $S_1$ take share equal to $a$ and the nodes in
$S_3$ take share equal to $a/2$. The nodes in $S_2$ take share equal to $z=a$.
Then $a=k^{1/p}2^{s_3/p}$.
The replication per input tuple in this case is:
%$$kpd(k^{1/(s_1+s_3)}\sqrt{2}^{s_1/(s_1+s_3)})^{-2}$$
%or, since $s_1+s_2+s_3=p$
%and $ds_2=pd/2$ (since all edges have an endpoint in $S_2$)
$$kpd\frac{2}{2^{\frac{2s_3}{p}}k^{\frac{2}{p}}} ~~~~~~~~~~~~~~~~~~~~~~~~~~~~~~~~~~~~Eq. (3)$$
\end{example}
Equations (2) and (3) show how the communication cost may vary in different situations.

\subsection{Advantage of Variable-Oriented Processing}
\label{combine-app}

In this section, we prove that it is always more efficient to combine all CQ's for a sample graph than it is to evaluate them separately or in groups.
%Finally, we discuss a hybrid approach (see Section~\ref{hybrid-subsect}) where we use uniform shares for the variables and order the nodes of the data graph by the bucket into which they hash, as we did in Section~\ref{improved-subsect}.

In Example~\ref{sq-cc-ex} we explained how to compute the communication cost
when each reducer produces the portion of the result for each of the CQ's~-- that
portion is determined by the list of buckets associated with the reducer.
The following theorem says that, in order to minimize the communication cost
we should use one hash function for the entire group of CQ's. Remember that
each CQ in the group has the same relational subgoals, with arguments in
different order.  Thus when we combine many CQ's to be executed in the same
reducer, each term in the communication cost is a sum of the same terms only
with different coefficients (which are either 1 or 2).

\begin{theorem}
\label{group-th}
Let $S$ be a sample graph and ${\mathcal{Q}}$ be a group of queries that produce the instances of $S$ in a data graph $G$; i.e.,  each query in ${\mathcal{Q}}$ has a number of relational subgoals, each subgoal corresponding to an edge of $S$ (hence the relational subgoals in queries in ${\mathcal{Q}}$ differ only in the order of the variables in their arguments).
Then the communication cost of computing all queries in the group ${\mathcal{Q}}$ by breaking down the group in several subgroups is greater than or equal to the communication cost of computing the group ${\mathcal{Q}}$ by combining all CQ's in one whole.
\end{theorem}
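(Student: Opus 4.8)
The plan is to push everything through the cost formula of \cite{AU10} and then to exploit the single structural feature that distinguishes one group of CQ's from another: whether a given edge of $S$ is shipped in one orientation or in both. Fix the number of reducers $k$ (used by the single combined join, and also by each of the jobs into which a split breaks the computation). For any group $\mathcal R$ of CQ's for $S$ -- where, by hypothesis, every CQ has exactly one relational subgoal per edge of $S$ -- and any assignment of shares $\mathbf x=(x_1,\dots,x_p)$ with $\prod_i x_i = k$, the variable-oriented communication cost of $\mathcal R$ is
$$\mathrm{cc}(\mathcal R,\mathbf x)=m\sum_{(i,j)\in E(S)}c^{\mathcal R}_{ij}\,\prod_{\ell\neq i,j}x_\ell ,$$
where $c^{\mathcal R}_{ij}\in\{1,2\}$ counts the distinct orientations in which edge $(i,j)$ appears among the CQ's of $\mathcal R$ (this is exactly the expression illustrated in Example~\ref{sq-cc-ex}). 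Write $C_{\mathcal R}(k)$ for the minimum of $\mathrm{cc}(\mathcal R,\cdot)$ over share assignments using $k$ reducers -- the quantity the optimization of \cite{AU10} produces -- and write $C_0(k)$ for the same minimum of the ``all-ones'' cost $\mathrm{cc}_0(\mathbf x)=m\sum_{(i,j)}\prod_{\ell\neq i,j}x_\ell$.

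Everything now hinges on the trivial two-sided bound $1\le c^{\mathcal R}_{ij}\le 2$: the lower bound holds because \emph{every} CQ of \emph{every} group has a subgoal for edge $(i,j)$, so at least one orientation is present; the upper bound holds because there are only two orientations. Hence $\mathrm{cc}_0(\mathbf x)\le\mathrm{cc}(\mathcal R,\mathbf x)\le 2\,\mathrm{cc}_0(\mathbf x)$ for every $\mathbf x$, and minimizing gives
$$C_0(k)\ \le\ C_{\mathcal R}(k)\ \le\ 2\,C_0(k)$$
for every group $\mathcal R$ (the upper inequality by evaluating $\mathrm{cc}(\mathcal R,\cdot)$ at a minimizer of $\mathrm{cc}_0$; the lower one because $\mathrm{cc}_0\le\mathrm{cc}(\mathcal R,\cdot)$ pointwise).

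To finish: suppose $\mathcal Q$ is broken into $t\ge 2$ subgroups $\mathcal Q_1,\dots,\mathcal Q_t$, each evaluated with its own share assignment on $k$ reducers; the cost of this split computation is at least $\sum_{q=1}^t C_{\mathcal Q_q}(k)$ (using anything but optimal shares only makes it larger). Applying the left inequality to each subgroup and then the right inequality to $\mathcal Q$,
$$\sum_{q=1}^t C_{\mathcal Q_q}(k)\ \ge\ t\,C_0(k)\ \ge\ 2\,C_0(k)\ \ge\ C_{\mathcal Q}(k),$$
which is the claimed inequality between the split and the combined communication cost.

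The two points I would be careful about are (i) the normalization of resources -- holding $k$ fixed per job is what makes the comparison meaningful, since using fewer reducers only lowers communication -- and (ii) the fact, used implicitly, that partitioning $\mathcal Q$ into subgroups never deletes an edge-subgoal from any CQ, which is precisely what keeps every subgroup at or above the baseline $C_0(k)$. It is worth noting that the inequality is tight: when $t=2$ and the two subgroups orient every edge of $S$ oppositely -- so $c^{\mathcal Q}_{ij}=2$ for all edges while $c^{\mathcal Q_q}_{ij}=1$ -- both sides can equal $2C_0(k)$, which is also why the hypothesis really is ``several'' subgroups, i.e.\ $t\ge 2$.
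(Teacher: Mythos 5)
Your proposal is correct and follows essentially the same route as the paper: both arguments compare every group's optimal cost to the all-coefficients-one baseline (your $C_0(k)$, the paper's $OPT_{{\mathcal{Q}}_{single}}$) via the observation that each coefficient lies in $\{1,2\}$, and then chain $\sum_q C_{{\mathcal{Q}}_q}(k)\ge t\,C_0(k)\ge 2C_0(k)\ge C_{\mathcal{Q}}(k)$, which is exactly the paper's $OPT_{{\mathcal{Q}}_{1}}+OPT_{{\mathcal{Q}}_{2}}\ge 2\,OPT_{{{\mathcal{Q}}_{single}}}\ge OPT_{{{\mathcal{Q}}_{all}}}$ generalized to $t$ subgroups. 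Your explicit pointwise sandwich $\mathrm{cc}_0\le\mathrm{cc}(\mathcal R,\cdot)\le 2\,\mathrm{cc}_0$ is a cleaner packaging of the paper's Claim~(*), and your remarks on fixing $k$ per job and on tightness are welcome additions, but the underlying argument is the same.
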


\begin{proof}
Because each query in the group has the same relational subgoals up to argument reordering,  the expression that gives the communication cost (hereafter in this proof referred to as the ``cost'') for each subgroup has the same terms but with different coefficients.
Moreover each coefficient is either 1 or 2. Finally each term is a product of variables, each variable being constrained to be greater than or equal to 1. Let $E$ be the expression that corresponds to the cost of ${\mathcal{Q}}$ and $E_i$ be the expression
that corresponds to the cost of a subgroup ${{\mathcal{Q}}_{i}} \subseteq {\mathcal{Q}}$.

First we prove the following. Suppose cost expression $E_1$ differs from cost expression $E_2$ in that the terms in $E_2$ with coefficient equal to 2 is a superset of those terms in $E_1$.  Then the following is true:

Claim (*) : $\min{E_1} \leq \min{E_2}$.

In proof of the claim let ${\mathcal{A}}$ be an assignment  of values to the variables in expression $E_2$   that minimizes $E_2$.
If we use the same assignment ${\mathcal{A}}$ in $E_1$, we get something smaller than
what we get by using this assignment ${\mathcal{A}}$ in $E_2$, hence smaller than
$\min{E_2}$. Since the assignment ${\mathcal{A}}$ does not necessarily minimize $E_1$, $\min{E_1}$ could be even smaller, hence $\min{E_1} \leq \min{E_2}$.

Now let:

\begin{itemize}
\item $OPT_{{{\mathcal{Q}}_{all}}}$ be the minimum cost for computing all the queries in ${\mathcal{Q}}$ as a whole.

\item $OPT_{{{\mathcal{Q}}_{single}}}$ be the minimum cost for computing a single query, i.e,., all coefficients are equal  to 1.
\item $OPT_{{{\mathcal{Q}}_{i}}}$ be the minimum cost for computing all the queries in ${{\mathcal{Q}}_{i}}$ as a whole, where ${{\mathcal{Q}}_{i}} \subseteq {\mathcal{Q}}$
\end{itemize}

Then the following holds:

$$OPT_{{{\mathcal{Q}}_{all}}} \leq 2OPT_{{{\mathcal{Q}}_{single}}} \leq OPT_{{{\mathcal{Q}}_{1}}} + OPT_{{{\mathcal{Q}}_{2}}}$$

The first inequality above is proven by observing that the cost expression for twice the cost of a single query compared to the
cost expression for ${\mathcal{Q}}$, has the property of Claim~(*), i.e., the terms with coefficient equal to 2 in the former is a superset
of those in the latter. Hence the first inequality is a consequence of Claim~(*). The second inequality is also a consequence of Claim~(*),
since it comes from two inequalities $OPT_{{{\mathcal{Q}}_{single}}} \leq OPT_{{{\mathcal{Q}}_{1}}}$ and $OPT_{{{\mathcal{Q}}_{single}}} \leq OPT_{{{\mathcal{Q}}_{2}}}$.
\end{proof}

\subsection{Bucket-Oriented Processing}
\label{bucket-oriented-subsect}

While the method of Section~\ref{combine-cq-subsect} determines the optimal number of buckets for each variable, this approach uses the same number of buckets, $b$, for each of the variables in each CQ.  We thus lose the opportunity to optimize this number of buckets, but the compensating advantage is that each edge is distributed among the reducers in only one orientation.  Which method is better depends on how far from optimal the choice of equal numbers of buckets is, and on how many subgoals among the set of CQ's have their arguments in both directions.

The bucket-oriented approach does the following:

\begin{enumerate}
\item Create a reducer for each nondecreasing sequence of $p$ bucket numbers in the range 1 to $b$.

\item For each edge $(u,v)$, hash $u$ and $v$ to buckets, using the hash function $h$.  To determine which reducers get this edge, form a multiset of integers, starting with $h(u)$ and $h(v)$.  Then, add $p-2$ integers in the range 1 to $b$.   These integers may duplicate $h(u)$ and $h(v)$ as well as each other.  Sort the multiset to get a nondecreasing list, which corresponds to exactly one reducer.  These are the reducers that receive a copy of the edge $(u,v)$.

\item Let each reducer evaluate each of the CQ's for the given sample graph, using the edges it is given.  Since every CQ has a total order of the variables, each solution for the CQ will be discovered by exactly one of the reducers.
\end{enumerate}

While we cannot directly compare the bucket-oriented and vari\-able-oriented methods, we can at least claim that the bucket-ori\-ented method beats the generalization of Partition for arbitrary sample graphs. The advantage, however, decreases as $p$ increases.

To this end, we first work as in Theorem~\ref{order-buckets-th}, and count the number of useful reducers by comparing them to certain binary strings. Thus, we can show that the number of reducers used by an application of the bucket-oriented method, using $b$ buckets for a sample graph of $p$ nodes is $\binom{b+p-1}{p}$.
Similarly, we can count the number of reducers that receive each edge, which is equal to $\binom{b+p-3}{p-2}$. For large $b$, this count is approximately $b^{p-2}/(p-2)!$.

Let us compare this number with what we get by generalizing the Partition Algorithm. If instead we partition the nodes into $b$ buckets and use reducers that correspond to sets of $p$ groups (the obvious generalization of the Partition Algorithm), then edges going between nodes in two different groups are sent to $\binom{b-2}{p-2}$  reducers.  However, an edge going between nodes of the same group are sent to $\binom{b-1}{p-1}$ reducers.  Since $1/b$th of the edges are of the latter kind, the average number of reducers receiving an edge is, for large $b$, approximately
$$b^{p-2}\Bigl(\frac{1}{(p-2)!} + \frac{1}{(p-1)!}\Bigr)$$
Thus, the ratio of the communication cost per edge for generalized Partition, compared with that of the bucket-oriented algorithm is, for large $b$, equal to $1+\frac{1}{p-1}$.
This ratio is always greater than 1, although it approaches 1 as $p$, the number of nodes in the sample graph, gets large.

\section{Conjunctive Queries for Cycles}
\label{cyc-sect}

In this section we consider an algorithm for finding all occurrences of the cycle $C_p$ of length $p$.  The strategy is based on the orientation of the edges.  Intuitively, when we start with all orders of the nodes, use the automorphisms to reduce the number, and then further reduce the number of CQ's by clustering according to edge orientations, we do not avoid many of the possible orientations.  If we start with the orientations only, then we can use the automorphisms effectively to cut down the number of orientations that actually need CQ's.  However, the effect of automorphisms on edge orientations is more complex than their effect on node orders, so it is only in special cases such as cycles that we can get general rules for selecting CQ's by starting with orientations.

\begin{figure}
\centerline{\includegraphics[width=0.37\textwidth]{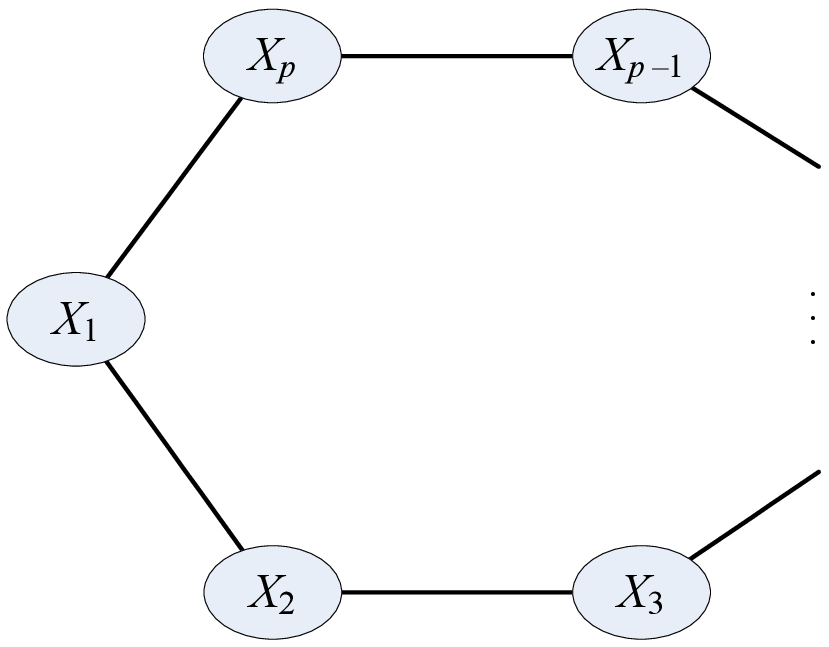}}
\caption{A cycle of $p$ nodes}
\label{cycle-fig}
\end{figure}

Imagine cycle $C_p$ with nodes $X_1,X_2,\ldots,X_p$ arranged in a circle, with $X_i$ counterclockwise of $X_{i-1}$, as suggested in Fig.~\ref{cycle-fig}.
If $X_{i-1}<X_i$ we shall say the edge $(X_{i-1},X_i)$ is an {\em up edge} (designated by $u$) and otherwise it is a {\em down edge} (designated $d$).  Likewise, if $X_p>X_1$ then the edge $(X_1,X_p)$ is an up edge, and otherwise it is a down edge.

Every cycle can be oriented so that $X_1$ is lower than its neighbors.  For example, we could pick $X_1$ to be the lowest node on the cycle, but often there are other choices for $X_1$ as well.  In general, an orientation of the edges counterclockwise around the cycle can be described by the runs of up and down edges.  This sequence must begin with a run of up edges and end with a run of down edges, because of our assumption about $X_1$.  The sum of the run lengths is $n$, and there must be an even number of runs, because they begin with $u$ and end with $d$.

\begin{example}
\label{penta-ex}
Consider the pentagon $C_5$.  The possible sequences of run lengths are 14, 23, 32, 14, 1112, 1121, 1211, and 2111.  These are all the sequences of positive integers that sum to five and have even length.  The sequence 14 corresponds to the orientation of edges where, starting with $X_1$ and proceeding counterclockwise, we have orientations $udddd$.  That is, $X_1<X_2$, but $X_2>X_3>X_4>X_5>X_1$.  Similarly, the other seven sequences of runs correspond to orientations $uuddd$, $uuudd$, $uuuud$, $ududd$, $uduud$, $uddud$, and $uudud$.
\end{example}

\subsection{Automorphisms and Run Sequences}
\label{auto-run-subsect}

The cycle $C_p$ has an automorphism group of size $2p$.  This group is the product of the group of cyclic shifts ($p$ elements) and the group of two elements ``flip'' and ``don't flip'' (the identity).  Some run sequences are transformed into other sequences by these automorphisms.
Because we insist that $X_1$ be lower than its neighbors, not every cyclic shift corresponds to another run sequence.  However, if we rotate the run sequence by two (which may correspond to rotating the cycle by more than two positions), we get another node as $X_1$, and that node will also be less than its neighbors.  To avoid double-counting of cycles, we want to eliminate a run sequence if it is a cyclic shift of another run sequence by an even number of positions.

\begin{example}
\label{penta2-ex}
Thus, in Example~\ref{penta-ex}, the orientations $ududd$ and $uddud$ are equivalent; each is a cyclic shift by two runs of the other.  Thus, the CQ for either produces exactly the same instances of $C_5$ that the other does.  Likewise, $uduud$ and $uudud$ are equivalent, and we can use either one.  Let us therefore eliminate $uddud$ and $uudud$.
\end{example}

The automorphism in which we flip the cycle also allows us to eliminate some of the run sequences.  Flipping reverses the sequence of run lengths.  Its effect on a sequence of $u$'s and $d$'s is twofold:

\begin{enumerate}

\item
The sequence of $u$'s and $d$'s is reversed.

\item
Then each $u$ is replaced by $d$ and vice-versa.

\end{enumerate}

\begin{example}
\label{penta3-ex}
The six orientations that remain after using the cyclic shifts of Example~\ref{penta2-ex} are $udddd$, $uuddd$, $uuudd$, $uuuud$, $ududd$, and $uduud$.
If we flip $udddd$, we reverse it to get $ddddu$ and then exchange $u$'s and $d$'s to get $uuuud$.  That is, we can eliminate $uuuud$ in favor of $udddd$.  Simularly, the flip of $uuddd$ is $uuudd$, so we can eliminate the latter.  Finally, the flip of $ududd$ is $uudud$.  The latter was already found to produce the same cycles as $uduud$, so we know that $uduud$ provides no cycles that $ududd$ does not provide.  There are thus only three CQ's needed to find all pentagons, those corresponding to orientations $udddd$, $uuddd$, and $uduud$.  These CQ's are, respectively:

\begin{center}
\begin{tabular}{l}
$E(X_1,X_2)$ \& $E(X_3,X_2)$ \& $E(X_4,X_3)$ \& $E(X_5,X_4)$ \& $E(X_1,X_5)$\\ \& $X_1<X_2$ \& $X_3<X_2$ \& $X_4<X_3$ \& $X_5<X_4$ \& $X_1<X_5$\\
$E(X_1,X_2)$ \& $E(X_2,X_3)$ \& $E(X_4,X_3)$ \& $E(X_5,X_4)$ \& $E(X_1,X_5)$\\ \& $X_1<X_2$ \& $X_2<X_3$ \& $X_4<X_3$ \& $X_5<X_4$ \& $X_1<X_5$\\
$E(X_1,X_2)$ \& $E(X_3,X_2)$ \& $E(X_3,X_4)$ \& $E(X_4,X_5)$ \& $E(X_1,X_5)$\\ \& $X_1<X_2$ \& $X_3<X_2$ \& $X_3<X_4$ \& $X_4<X_5$ \& $X_1<X_5$\\
\end{tabular}
\end{center}

Notice that if we use the methods of Section~\ref{sg-cq-sect} we wind up with seven CQ's rather than the three above.  That is, there are 120 orders of five nodes.  The automorphism group of the pentagon has size 10, so we start with 12 CQ's.  However, if we choose these CQ's to all satisfy the constraints that $X_1$ is smallest and $X_2<X_5$, then the CQ's group into seven orientations of the remaining edges $(X_2,X_3)$, $(X_3,X_4)$, and $(X_4,X_5)$.  Note that one of the eight orientations is impossible, because we cannot have $X_2<X_5<X_4<X_3<X_2$.
\end{example}

While we always can eliminate a run sequence that produces the same instances as some other run sequence, there are some run sequences that are automorphic to themselves.  We cannot eliminate a sequence in favor of itself, so we are forced to find some other way to eliminate duplication.   The number of times each cycle will be discovered by the CQ for a sequence is the number of flips and cyclic shifts (including the identity) that leave the sequence unchanged.  We can avoid discovering a cycle more than once by adding the inequalities that make $X_1$ the smallest of all nodes, and also the inequality $X_2<X_p$ to prevent a cycle and its flip from both being recognized.  The problem and its solution can be seen by examining the hexagon, $C_6$.

\begin{example}
\label{hex-ex}
For the hexagon, there are five run sequences of length 2; these are 15, 24, 33, 42, and 51.  The last two are obviously reversals of the first two, so we can eliminate them.  But 33, which corresponds to the orientation $uuuddd$, will produce each hexagon that it produces twice, as any matching hexagon can be flipped.

There is only one run sequence of length 6: 111111, or $ududud$.  This sequence matches each hexagon that it matches at all six times, corresponding to zero, one, or two rotations of 120 degrees and/or flipping.

There are ten sequences of four runs.  We can have one run of 3 and three runs of 1.  But of these, 1113 is the reverse of 3111 and 1311 is the reverse of 1131, so only 1113 and 1131 need be considered.   Other sequences of four runs have two 1's and two 2's.  There are six such sequences, but when we eliminate reversals and rotations by two positions, we are left with only 1122, 1212, and 1221.  There are thus seven sequences for which we must write CQ's: the three just mentioned plus 15, 24, 33, and 111111.  All but the last two are straightforward.  For 33, we need to force $X_2<X_6$ to prevent flipping.   For 111111, we need to force $X_1<X_3$, and $X_1<X_5$ to prevent rotation by 120 or 240 degrees, and we need to force $X_2<X_6$ to prevent flipping.  The resulting CQ is

\begin{center}
\begin{tabular}{l}
$E(X_1,X_2)$ \& $E(X_3,X_2)$ \& $E(X_3,X_4)$ \& \\
$E(X_5,X_4)$ \& $E(X_5X_6)$ \& $E(X_1,X_6)$ \& \\
$X_1<X_2$ \& $X_3<X_2$ \& $X_3<X_4$ \& $X_5<X_4$ \& $X_5<X_6$ \& \\
$X_1<X_6$ \& $X_1<X_3$ \& $X_1<X_5$ \& $X_2<X_6$\\
\end{tabular}
\end{center}
\end{example}

\subsection{Algorithm for Finding Cycles Using Runs and Orientation}

The algorithm for finding uniquely cycles of length $p$ that formalizes the above examples is the following:

\begin{enumerate}

\item Find all bags (i.e., multisets) of an even number of positive integers that sum to $p$.

\item
For each bag find all permutations of its elements
and form set $S_1$ of permutations. From $S_1$ delete permutations to form its subset $S$, so that in $S$ no permutation is a nontrivial cyclic shift, with optional flip, of another.

\item
For each permutation in $S$  (generated in the previous
step) create the corresponding pattern of $u$'s and $d$'s; i.e., as you read the permutation, replace every integer by that number of $u$'s or $d$'s,, starting with $u$'s and alternating $u$'s and $d$'s.
This sequence of $u$'s and $d$'s tells us the order of arguments for the relational subgoals of the CQ for this permutation.  The arithmetic subgoals enforce the relationship between adjacent nodes of the cycle, as are implied by the $u$'s and $d$'s.  We then modify these CQ's as follows.

\item
For each CQ created in the previous step do:

\begin{enumerate}

\item If this CQ is not a palindrome and has no nontrivial periodicity, do nothing.

\item If this CQ is a palindrome then add the inequality $X_2<X_p$.

\item If this CQ has nontrivial periodicity, then add a number of inequalities
(equal to $p$ divided by the period, i.e., the length of the
smallest repeated string) that say that
$x_1$ is less than any of the other positions that are also less than
both neighbors.

\end{enumerate}
\end{enumerate}

%The following theorem establishes that when the CQ's created by the algorithm are applied to a data graph, each cycle is discovered once.

\begin{theorem}
\label{auto-once-thm}
The group of CQ's created by the algorithm is such that, when it is applied to a data graph, each cycle is discovered once.
\end{theorem}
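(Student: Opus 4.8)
The plan is to fix an arbitrary instance $I$ of $C_p$ inside the data graph $G$ and show that exactly one pair (a CQ output by the algorithm, a satisfying assignment of its variables) discovers $I$. Since $C_p$ is connected and $2$-regular, there are exactly $2p$ graph isomorphisms (``traversals'') from the labelled cycle on $X_1,\dots,X_p$ onto $I$, and they form a torsor under the automorphism group $D_p$ of $C_p$. To each traversal $\phi$ I attach its up/down pattern $w(\phi)\in\{u,d\}^p$, position $i$ being $u$ iff $\phi(X_i)<\phi(X_{i+1})$ in the global node order (indices mod $p$). Two immediate facts drive everything: the relational subgoals of any CQ the algorithm produces are automatically satisfied by every traversal of $I$, because $I$ is a cycle and $E$ stores each edge exactly once with its endpoints in sorted order; and a traversal $\phi$ satisfies the \emph{base} CQ for a run-sequence $\sigma$ (the version carrying only the adjacent-node inequalities) iff $w(\phi)$ is precisely the $u/d$ string spelled out by $\sigma$. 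Hence $\phi$ discovers $I$ through the base CQ of $\sigma$ iff $w(\phi)$ equals the pattern of $\sigma$.

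Next I would record how $D_p$ acts on patterns: composing a traversal with a rotation cyclically shifts $w(\phi)$, and composing with a reflection replaces $w(\phi)$ by its flip (reverse, then swap $u\leftrightarrow d$), exactly the two operations of Section~\ref{auto-run-subsect}. A pattern is \emph{valid} (begins with a $u$-run, ends with a $d$-run) iff $\phi(X_1)$ is a local minimum of $I$, and the valid patterns are precisely those arising from run-sequences. The crucial claim is that for a fixed $I$ the set $\{w(\phi):\phi\text{ a traversal of }I\text{ with }\phi(X_1)\text{ a local minimum}\}$ is a single orbit under the group generated by even run-shifts and flip: sliding the start to the next local minimum in the same direction realizes an even run-shift, reversing the direction realizes a flip, and every valid traversal is reached from a fixed one by such moves. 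Since even run-shifts and the flip merely permute the runs, this orbit lies inside a single bag, and Step~2 retains exactly one run-sequence per orbit; therefore there is a unique run-sequence $\sigma^*$ output by the algorithm whose pattern occurs among the valid patterns of $I$, and $I$ can be discovered only through the (modified) CQ $Q^*$ for $\sigma^*$.

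It remains to show $I$ is discovered through $Q^*$ exactly once. Write $w^*$ for the pattern of $\sigma^*$ and $T$ for the set of traversals $\phi$ of $I$ with $w(\phi)=w^*$; from the torsor picture $|T|=|\mathrm{Stab}_{D_p}(w^*)|$, and the rotation part of this stabilizer is the cyclic group of order $m=p/\ell$ where $\ell$ is the period of $w^*$, so the stabilizer is one of: trivial when $w^*$ is aperiodic and not a palindrome (case 4(a), no inequality added, $|T|=1$); of order $2$, generated by the flip, when $w^*$ is an aperiodic palindrome (case 4(b)); cyclic of order $m$, generated by the periodicity, or dihedral of order $2m$ when $w^*$ is additionally a palindrome (case 4(c), possibly together with 4(b)). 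I would then verify that the added inequalities carve out exactly one member of $T$: the $m$ positions onto which the period rotation carries position $1$ are distinct positions holding pairwise distinct node values, so ``$X_1$ least among those local-minimum positions'' is met by exactly one period-translate; and, orthogonally, the flip interchanges $X_2$ and $X_p$, which hold distinct values, so ``$X_2<X_p$'' eliminates exactly one of each flip-related pair. One then checks that the conjunction of the periodicity inequalities of 4(c) and the palindrome inequality of 4(b) is a fundamental domain for the action of $\mathrm{Stab}_{D_p}(w^*)$ on $T$, leaving exactly one surviving traversal and hence one witness.

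The main obstacle is exactly this last verification in the self-symmetric cases: pinning down $\mathrm{Stab}_{D_p}(w^*)$ as precisely the group generated by the periodicity and the palindrome symmetry, and confirming that the inequalities of Steps~4(b) and~4(c) \emph{together} (neither suffices alone) leave a single element of $T$ when $\sigma^*$ is simultaneously periodic and a palindrome, which requires a careful bookkeeping of which reflections in the stabilizer fix which period-translate of position $1$. A companion point needing care is the crucial claim of the second paragraph when $I$ is itself highly symmetric so that several distinct traversals produce the same pattern; that is precisely the regime in which $\mathrm{Stab}_{D_p}(w^*)$ is nontrivial, so the two analyses must be run in tandem, using throughout only that the node values of $I$ are pairwise distinct.
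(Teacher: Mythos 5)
Your proposal follows essentially the same route as the paper's proof, but recast in a more rigorous group-theoretic frame. The paper argues (i) two \emph{different} CQ's cannot find the same cycle because the algorithm retains only one run sequence per shift/flip class, and (ii) a single CQ can rediscover a cycle only via a symmetry of its own $u/d$ pattern ($AB=BA$ forces periodicity; reversal forces palindromicity), which the Step~4 inequalities break. Your torsor-and-stabilizer formulation is exactly this argument made precise: your ``single orbit of valid patterns'' claim is the paper's point (i) plus the completeness claim the paper only makes after the theorem (that every cycle is caught by some retained sequence), and your identification of $|T|$ with $|\mathrm{Stab}_{D_p}(w^*)|$ is the paper's point (ii). What your version buys is a clean separation of ``at most one CQ'' from ``at most one witness per CQ,'' and an honest accounting of where the work remains: the check that the 4(b) and 4(c) inequalities cut out a fundamental domain for a stabilizer that is simultaneously periodic and palindromic. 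You leave that verification open, but so does the paper --- its proof simply asserts that ``the extra inequalities will take care of the uniqueness.'' One small slip: the relational subgoals are \emph{not} automatically satisfied by every traversal (since $E$ stores each edge once in sorted order, $E(X_i,X_{i+1})$ already forces $\phi(X_i)<\phi(X_{i+1})$); this is harmless because your subsequent ``satisfies the base CQ iff the pattern matches'' is correct either way.

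One substantive point in your favor: you read the Step~4(c) inequalities as ``$X_1$ least among the $p/\ell$ period-translates of position~1,'' whereas the paper's proof glosses them as forcing $X_1$ to be the \emph{globally} lowest node. These differ precisely when the periodic block contains more than one local minimum (e.g., the pattern $uddud\,uddud$ for $p=10$, whose local-min positions $\{1,4,6,9\}$ split into two translate classes). Under the ``globally lowest'' reading a cycle whose global minimum sits in the wrong translate class satisfies no retained CQ and is missed; under your reading exactly one rotational translate survives, which is what the theorem needs. So your formalization is not merely a restatement --- it pins down the interpretation of the algorithm under which the theorem is actually true.
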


\begin{proof}
Suppose two different CQ's compute the same cycle. This cannot
happen because then they have to be a cyclic shift or a flip of each other,
and the algorithm never includes two such CQ's.
So, if a cycle is discovered twice, it must be by the same CQ.
In this case one assignment of nodes to variables of the CQ follows around the cycle a string  $AB$ (where $A,B$ are strings of $u$'s and $d$'s)
and the other follows around the cycle a string $BA$. Since it is the same CQ, we
know $AB=BA$. Thus there is a string $W$ and integers $i,j$
such that $A=W^i$ and $B=W^j$. Hence there is periodicity in the query, and we show below that the extra in\-equal\-i\-ties will take care of the uniquness.

There are two cases:

\begin{enumerate}

\item
The CQ is neither a palindrome nor has periodicity.
Since it is not a palindrome we cannot start computing the subgoals
in the opposite direction. Since it has no periodicity, we cannot
start computing the subgoals from any other node of the cycle either.

\item
It is either a palindrome or has periodicity, or both.
If it is a palidrome with no periodicity, the CQ can only compute the cycle starting from the same node but going the opposite direction. This cannot happen because
we put the extra inequality to force the assignment to choose the smaller
of the two neighbors of the starting node.
In the second case, the CQ can start also from several other nodes of
the cycle. The extra inequalities involving $X_1$ force $X_1$ to be the globally lowest node in the cycle instance.

\end{enumerate}
\end{proof}

Now we claim that the group of CQ's found by our algorithm is minimum.
Below, first we prove that it is minimal and then we prove that it is minimum (i.e., unique up to renaming of variables and order of subgoals) too.
Suppose $C$ is a cycle in the data graph.  Define the {\em characteristic sequence} of edge orientations for $C$ by starting at the node with the minimum
value and proceeding around the cycle by visiting next the neighbor of the starting node
with the smaller value. We produce the sequence by reporting
on the orientation ($u$ or $d$) of each edge we are visiting. We claim that each cycle
can be discovered by a CQ that either corresponds to its characteristic sequence
or to a cyclic shift or to a flip of a cyclic shift of the characteristic sequence.
This is so because we can start computing the query by mapping its first variable
to the lowest node of the cycle and either go clockwise or counter-clockwise.
Now, each group of CQ's contains only one CQ that is either the characteristic sequence
of $C$ or a flip or a flip of a cyclic shift or a cyclic shift of the characteristic sequence. Hence if we remove this CQ from the group, then $C$ will not be
discovered. Thus the group of CQ's we construct using our algorithm is minimal.

This group is also minimum for the following reason.
For ease of reference let us call each CQ that comes from a certain orientation and includes only inequalities
that are implied by those orientations a {\em basic CQ for $C$.}
If we take all groups of CQ's such that each group is closed
under flipping and cyclic shift, then we get a partition of all possible CQ's (i.e., all CQ's that can be formed
by taking a given orientation of the edges). Any group of CQ's that discovers all cycles
contains at least one from each subset of this partition, otherwise a cycle may be missed.
Thus the algorithm constructs a group that is minimum, since it only includes exactly one basic CQ from each subset of the partition.

\subsection{An Upper Bound on the Number of CQ's and Some Examples}

When the sample graph is a cycle of length
$p$ we can assume that the $2^p-2$ sequences of
$u$'s and $d$'s (the minus 2 is due to the fact
that all-$u$ and all-$d$ are not included) is divided by $p$
because only one representative of the $p$ cyclic shifts is included. It can be further divided by 2 because flips are
not included. Thus we get an upper bound of $(2^p-2)/(2p)$.
However this upper bound holds only in the cases
where there are no
sequences with nontrivial periodicity in the group.
If a sequence
has periodicity greater than one, then one sequence does not exclude as many as $p-1$ other sequences,
because  two distinct cyclic shifts of the sequence may result in the same sequence.
E.g., for the sequence $uuud$, all three other sequences that are created by a cyclic
shift are pairwise distinct, but for the sequence $udud$, we get only one other distinct
sequence that comes from a cyclic shift of one position, and this is the $dudu$.
Thus we call the upper bound  $(2^p-2)/(2p)$ a {\em conditional} upper
bound, since it holds only if all sequences have periodicity equal to 1.
Thus   $(2^p-2)/(2p)$ is an unconditional
upper bound (which is tight as we show below)
only when $p$ is a prime.

\begin{example}
Here are some examples for various $C_p$.

\begin{itemize}
\item $p=6$.  The conditional upper bound is (64-2)/12=5.17, but actually there are the following 7 CQ's in the group, from the following run sequences:
 111111,
 1122,
 1212,
 1113,
 15,
 24,
 33.

We observe that 111111 has periodicity 6 (3 multiplied by 2 to account for the flips too).
1212 has periodicity 2, and 33 has periodicity 2.  Summing up the
periodicities we get $6+2+2=10$.
We also observe that there are two more palindromes: 2112 and
1221 (both from cyclic shift of 1122); each accounts for 6.
Summing up periodicities and palindromes: $10+6+6=22$.
Thus the upper bound
in this case is $(64+22-2)/12=7$, which is actually equal to the
minimum number of CQ's in a group.

\item $p=7$.  The conditional upper bound is $126/14=9$, and since 7 is a prime,
the minimum number of CQ's in a group is indeed equal to 9.
The underlying run sequences are:
 111112,
 1123,
 1132,
 1222,
 1213,
 1114,
 16,
 25,
 34.

%\item $n=8$.  The conditional upper bound is $(2^n-2)/(2n)=15.89$ but we have more than that, we have 17 CQ's in the group, with the following run sequences:
%11111111,
%111113,
%111122,
%111212,
%112112,
%1124,
%1214,
%1133,
%1313,
%1232,
%2132,
%2222,
%1115,
%17,
%26,
%35,
%44.
%
%However observe that 272/16=17 and 272 comes from
%$18+2^8-2$ where 18 comes from summing the periodicity (including the flip in the first and last of the sequences below) of the following periodic run sequences:
%11111111 (contributes 8 to the 18),
%112112 (contributes 2),
%1313 (contributes 2),
%2222 (contributes 4),
%44 (contributes 2).
\end{itemize}
\end{example}

\section{Map-Reduce Computation Cost}
\label{comp-cost-sect}

We now turn to the second important measure of the quality of a map-reduce algorithm~-- the total computation cost at the mappers and reducers.  In each of the algorithms discussed, the computation at the mappers is proportional to the communication cost, so we shall ignore the mappers and focus on the computation at the reducers.  This cost is polynomial in the size of the data graph, but the degree of the polynomial can be large, and the critical issue, to be addressed in the balance of this paper, is how low can we make the degree of the polynomial.

\subsection{Convertible Algorithms}
\label{conv-alg-subsect}

Each of the methods we have described depends on a serial algorithm for finding instances of the same sample graph.  This algorithm is used at each reducer and is applied to a smaller graph.
However, the relationship between the number of edges and the number of nodes in the graphs at each reducer generally differs from the node/edge relationship for the entire graph.

\begin{definition}
A {\em mapping scheme} is a function from input elements to sets of key-value pairs.
\end{definition}

\begin{definition}
Let us call a serial algorithm $\cal A$ {\em convertible} with respect to a given mapping scheme if, given random input, when the algorithm is run at each reducer the total computation cost at the reducers is, with high probability, proportional to the running time of $\cal A$ run on a single machine.  The constant of proportionality may depend upon characteristics of the algorithm but not on the number of reducers.
%\footnote{We do not assert that every mapping scheme can use a candidate for convertibility; it may be that what is done by the reducers bears no relationship to a serial algorithm for the problem being solved. However, for a simple example where the idea makes sense but the algorithm is not convertible, consider binary search of a linear list of length $n$. If the mapping scheme partitions the sorted list into $k$ consecutive blocks and passes each to a reducer, then the serial algorithm takes time $O(\log n)$.  But the reducers, able to do nothing but binary search on a sorted list of length $n/k$, take total time $k\log(n/k)$.  The ratio $\bigl(k\log(n/k)\bigr)/\log n$ grows with $k$, so binary search is not convertible with respect to this mapping scheme.}
\end{definition}

For algorithms that enumerate instances of a sample graph, there is only one mapping scheme that we have considered or will consider.
Assume that we hash nodes of the sample graph into $b$ buckets, and the reducers correspond to lists of bucket numbers, one for each node of the sample graph $S$.
Edges are sent from the mappers to those reducers whose lists include both nodes of the edge.  Then the probability that a node appears in the graph of a given reducer is $O(1/b)$; the constant of proportionality is approximately the number of nodes in the sample graph.  The probability that an edge appears in the sample graph at a given reducer is $O(1/b^2)$, since both its nodes must be hashed to buckets in the list for the reducer.  Since data is random, skew is limited, and with high probability the reducers all get within a constant factor of the average numbers of nodes and edges.  Finally, assume that the best serial algorithm for finding all instances of $S$ on a graph of $n$ nodes and $m$ edges has running time $O(n^{\alpha}m^{\beta})$ for some constants $\alpha$ and $\beta$.

The number of reducers $k$ is $O(b^p)$, where $p$ is the number of nodes of $S$.
The computation performed by any reducer on a graph of $O(n/b)$ nodes and $O(m/b^2)$ edges requires time
$$O\bigl((n/b)^{\alpha}(m/b^2)^{\beta}\bigr)$$
Thus, the total work at all the reducers is $O\bigl(b^p (n/b)^{\alpha}(m/b^2)^{\beta}\bigr)$, or simplifying $O(b^{p-\alpha-2\beta}n^{\alpha}m^{\beta})$.  Put another way, the total work at the reducers is on the order of the work of the serial algorithm on the original graph times $b^{p-\alpha-2\beta}$.  If this exponent is positive, then the work at the reducers exceeds the work of the serial algorithm.  However, when the exponent is nonpositive, we can conclude:

\begin{theorem}
\label{comp-cost-th}
If the best serial algorithm for finding all instances of a sample graph $S$ runs in time $O(n^{\alpha}m^{\beta})$ on a graph of $n$ nodes and $m$ edges, and $\alpha+2\beta$ is no less than the number of nodes of $S$, then there is a convertible algorithm for finding all instances of $S$.
\end{theorem}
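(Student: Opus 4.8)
The plan is to check that the (unique) mapping scheme described above---hash each of the $p$ nodes of $S$ into $b$ buckets, create one reducer for each list of $p$ bucket numbers, and send each edge to every reducer whose list contains the buckets of both its endpoints---makes the best serial algorithm ${\cal A}$ convertible once the arithmetic hypothesis $\alpha+2\beta\ge p$ is in force. Almost all of the work is the replication-and-skew bookkeeping already carried out in the paragraph preceding the theorem; the proof just assembles those facts and invokes the hypothesis.

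First I would recall two facts from that bookkeeping. The number of reducers is $k=O(b^p)$. And, because the data graph is random, skew is limited: with high probability every reducer receives at most a constant factor more than the average load, i.e.\ $O(n/b)$ nodes and $O(m/b^2)$ edges. Running ${\cal A}$ at one such reducer therefore costs $O\bigl((n/b)^{\alpha}(m/b^2)^{\beta}\bigr)$ by the assumed serial running-time bound, and summing over all $O(b^p)$ reducers gives total reducer cost
$$O\bigl(b^p(n/b)^{\alpha}(m/b^2)^{\beta}\bigr)=O\bigl(b^{\,p-\alpha-2\beta}\,n^{\alpha}m^{\beta}\bigr).$$

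Now I would invoke the hypothesis: $\alpha+2\beta\ge p$ means the exponent $p-\alpha-2\beta$ is nonpositive, and since $b\ge 1$ this forces $b^{\,p-\alpha-2\beta}\le 1$. Hence the total reducer cost is $O(n^{\alpha}m^{\beta})$, the same order as ${\cal A}$ running on the whole data graph. The hidden constant is a product of the big-oh constants from the skew bound, from $k=O(b^p)$, and from the running-time bound on ${\cal A}$; it depends on $S$ and on ${\cal A}$ but not on the number of reducers $k$ (equivalently, not on $b$), which is exactly what the definition of convertibility demands. Therefore ${\cal A}$ is convertible with respect to this mapping scheme, so a convertible algorithm for finding all instances of $S$ exists.

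I do not anticipate a real obstacle here---the substantive estimates are all done before the statement. The one place meriting care is the ``with high probability'' clause, which rests entirely on the randomness of the data graph together with a standard concentration argument that no reducer is loaded more than a constant factor above the mean; and, relatedly, checking that the constant absorbing the big-oh factors is independent of $k$, since convertibility explicitly forbids that dependence.
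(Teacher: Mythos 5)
Your proposal is correct and follows essentially the same route as the paper, which proves this theorem in the paragraph immediately preceding its statement: count $O(b^p)$ reducers, bound each reducer's load by $O(n/b)$ nodes and $O(m/b^2)$ edges using the limited-skew observation for random data, multiply to get total work $O(b^{p-\alpha-2\beta}n^{\alpha}m^{\beta})$, and invoke $\alpha+2\beta\ge p$ to make the exponent of $b$ nonpositive. Your added remarks about the high-probability skew bound and the independence of the constant from $b$ match the paper's own framing.
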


\begin{example}
\label{comp-tri-ex}
Observe that for triangles, $p=3$, $\alpha=0$, and $\beta=3/2$, so the condition Theorem~\ref{comp-cost-th} holds.  We pointed out this observation of \cite{SV11} in Section~\ref{partition-subsect}.
\end{example}

\subsection{Decomposition of Sample Graphs}
\label{sg-decomp-subsect}

Let us call a serial algorithm with running time $O(n^{\alpha}m^{\beta})$, where $\alpha\ge 0$ and $\beta\ge 0$, an $(\alpha,\beta)$-algorithm.
Given that $p$, the number of nodes of the sample graph is a constant, any computation that depends only on the size of the sample graph, and not on the data graph, can be ignored when discussing $(\alpha,\beta)$-algorithms.
An important consequence of this observation is that it is possible to decompose sample graphs, and the algorithms for discovering instances of the subgraphs can then be combined in a way that preserves convertibility.

In what follows, we shall assume that the data graph is preprocessed so that there is an index on pairs of nodes that lets us determine in $O(1)$ time whether there is an edge between any two given nodes.  This index can be constructed in time $O(m)$, where $m$ is the number of edges of the data graph, and surely any algorithm for finding instances of a sample graph will at least look at each edge of the data graph.  Thus, the existence of this index will be assumed and the time to construct it can be ignored.

\begin{lemma}
\label{a-b-lemma}
Let $S$ be a $p$-node sample graph.  Partition the nodes of $S$ into two sets of $p_1$ and $p_2$ nodes, and let $S_1$ and $S_2$ be the subgraphs of $S$ generated by these two sets of nodes.  If $S_i$ has an $(\alpha_i,\beta_i)$-algorithm for $i=1,2$, then $S$ has an $(\alpha_1+\alpha_2,\beta_1+\beta_2)$-algorithm.
\end{lemma}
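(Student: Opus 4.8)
The plan is to describe an algorithm for $S$ that enumerates every instance of $S$ in the data graph $G$ by, in effect, running the two given algorithms one inside the other, with the index on pairs of nodes used to check the ``cross edges'' between the two parts. Fix the partition of the $p$ nodes of $S$ into the $p_1$ nodes forming $S_1$ and the $p_2$ nodes forming $S_2$. An instance of $S$ in $G$ is an injective map from the nodes of $S$ to nodes of $G$ that sends every edge of $S$ to an edge of $G$; such a map restricts to an instance of $S_1$ and (on the complementary node set) an instance of $S_2$, and conversely, a pair consisting of an instance of $S_1$ and an instance of $S_2$ extends to an instance of $S$ precisely when the two images are disjoint and every edge of $S$ running between an $S_1$-node and an $S_2$-node maps to an edge of $G$. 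So the algorithm is: enumerate all instances of $S_1$ using the $(\alpha_1,\beta_1)$-algorithm; for each one, enumerate all instances of $S_2$ using the $(\alpha_2,\beta_2)$-algorithm; and for each resulting pair, check disjointness and check the (constant number of) cross edges using the $O(1)$-time edge index, emitting the combined instance if all checks pass.

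Next I would bound the running time. The first algorithm runs in time $O(n^{\alpha_1}m^{\beta_1})$ and produces at most that many instances of $S_1$ (each instance costs at least unit time to report). For each such instance, running the second algorithm costs $O(n^{\alpha_2}m^{\beta_2})$, and the post-processing per pair of instances — the disjointness test among $p$ names and the verification of at most $\binom{p}{2}$ cross edges via the index — is $O(1)$ since $p$ is a constant. Hence the per-$S_1$-instance cost is $O(n^{\alpha_2}m^{\beta_2})$ plus $O(1)$ times the number of $S_2$-instances found, which is again $O(n^{\alpha_2}m^{\beta_2})$. Multiplying by the $O(n^{\alpha_1}m^{\beta_1})$ instances of $S_1$ (and adding the initial $O(n^{\alpha_1}m^{\beta_1})$ to enumerate them) gives a total of $O(n^{\alpha_1}m^{\beta_1} \cdot n^{\alpha_2}m^{\beta_2}) = O(n^{\alpha_1+\alpha_2}m^{\beta_1+\beta_2})$, which is exactly an $(\alpha_1+\alpha_2,\beta_1+\beta_2)$-algorithm. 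The one subtlety to state carefully is that the number of instances of $S_i$ output by an $(\alpha_i,\beta_i)$-algorithm is $O(n^{\alpha_i}m^{\beta_i})$; this holds because the algorithm must spend at least constant time per reported instance, which is why $(\alpha_i,\beta_i)$-algorithms with $\alpha_i,\beta_i\ge 0$ are the right notion, and it is also where the remark preceding the lemma — that the $O(m)$-time construction of the edge index can be absorbed, since any such algorithm already looks at every edge — is used.

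I expect the main obstacle to be purely expository rather than mathematical: making precise the correctness claim that pairs of sub-instances satisfying the disjointness and cross-edge conditions are in bijection with instances of $S$, and being careful that ``instance'' is taken consistently (as an occurrence, i.e.\ a subgraph-with-labeling, so that composing the two enumerations neither misses nor double-counts). There is no genuinely hard estimate here; the proof is a bookkeeping argument built on the two black-box algorithms and the constant-time edge index, and the only place one must not be sloppy is in the ``number of outputs is bounded by the running time'' step and in confirming that all edges of $S$ are accounted for — those within $S_1$, those within $S_2$, and those crossing — so that the emitted combined maps are exactly the instances of $S$.
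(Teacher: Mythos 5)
Your algorithm and time analysis coincide with the paper's: enumerate instances of $S_1$ and $S_2$ with the given algorithms, bound the number of instances of $S_i$ by the running time $O(n^{\alpha_i}m^{\beta_i})$, and for each of the $O(n^{\alpha_1+\alpha_2}m^{\beta_1+\beta_2})$ pairs perform an $O(1)$ disjointness and cross-edge check via the index. The gap is in correctness. You assert that pairs of sub-instances passing these checks are in bijection with instances of $S$, and you justify this by taking an ``instance'' to be a labeled injective map. That is not the convention of this paper: throughout, an instance is an unlabeled occurrence (a subgraph of $G$), and every algorithm is required to discover each occurrence \emph{exactly once} --- that is the entire point of the automorphism/CQ machinery, and the basis algorithms later fed into this lemma (the edge enumerator, the properly ordered $2$-path enumerator) emit each occurrence once, not once per labeling. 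Under that convention the bijection fails: a single occurrence $H$ of $S$ can arise from several distinct pairs of occurrences of $S_1$ and $S_2$, because $H$ may admit several assignments of its nodes to the nodes of $S$, and different assignments can split $H$'s nodes differently between the $S_1$-part and the $S_2$-part. (Take $S$ to be a $4$-path $a$--$b$--$c$--$d$ split into the edges $\{a,b\}$ and $\{c,d\}$: the occurrence $u$--$v$--$w$--$x$ is produced both from the pair $(\{u,v\},\{w,x\})$ and from the pair $(\{w,x\},\{u,v\})$.)

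The paper therefore includes a third check, which it explicitly flags as ``a little trickier'': fix a total order on the nodes of $G$; when a candidate $H$ is assembled, list its nodes in that order and form a string of $1$'s and $2$'s recording whether each node came from the $S_1$-instance or the $S_2$-instance; emit $H$ only if this string is lexicographically first among the strings arising from all valid assignments of $H$'s nodes to the nodes of $S$. This deduplication costs only a function of $p$ per pair, so the $O(n^{\alpha_1+\alpha_2}m^{\beta_1+\beta_2})$ bound survives, but it is a necessary ingredient rather than an expository nicety: without it (or an equivalent canonical-form test) the combined algorithm either over-reports occurrences or silently changes the meaning of ``enumerate'' relative to the rest of the paper. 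You should add this step; the remainder of your argument is sound.
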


\begin{proof}
Use the two given algorithms to enumerate all instances of $S_1$ and $S_2$ in the data graph.
As there can be no more in\-stan\-ces of a subgraph than the running time of the algorithm that enum\-er\-ates them, there are $O(n^{\alpha_1}m^{\beta_1})$ instances of $S_1$ and  $O(n^{\alpha_2}m^{\beta_2})$ instances of $S_2$. Therefore, the number of pairs of these instances is  $O(n^{\alpha_1+\alpha_2}m^{\beta_1+\beta_2})$.  For each pair of instances:

\begin{enumerate}

\item
Check that the nodes of the two instances are disjoint.

\item
Check that for each edge in $S$ that connects a node of $S_1$ with a node of $S_2$, the edge between the corresponding nodes of the two instances exists in the data graph.

\item
Check that each instance of $S$ is generated only once.

\end{enumerate}

The work of the above steps for any one pair of instances depends only on $p$, the number of nodes of $S$.  Step~(1) clearly takes time $O(p)$.  The index allows Step~(2) to be carried out in $O(1)$ time per edge of $S$.

Step~(3) is a little trickier.  Whenever we generate an instance, we need to check that it is lexicographically first among all the ways that this instance can be generated from instances of $S_1$ and $S_2$.  To see how this step can be carried out, assign an order to the nodes of the data graph $G$.  Once we have identified an instance $H$ of $S$ in $G$, order the nodes of that instance according to the order for $G$.  Form a string of 1's and 2's, where the $i$th position of the string is 1 if the $i$th node of $H$ came from the instance of $S_1$ and 2 otherwise.  Now, consider all other possible assignments of the nodes of $H$ to the nodes of $S$, and construct their strings in the same way.  Only if this construction of $H$ is the lexicographically first among all these strings do we now generate instance $H$.  Otherwise, $H$ will be generated when we consider some other pair of instances of $S_1$ and $S_2$.

Thus, the total work is proportional to the number of pairs of instances plus the time taken by the algorithms that enumerate the instances of $S_1$ and $S_2$.  Since we assume the $\alpha$'s and $\beta$'s are nonnegative, the latter algorithms each take no more time than the upper bound on the number of pairs, which is $O(n^{\alpha_1}m^{\beta_1})$ times $O(n^{\alpha_2}m^{\beta_2})$.  Thus the entire algorithm for finding instances of $S$ takes time $O(n^{\alpha_1+\alpha_2}m^{\beta_1+\beta_2})$.
\end{proof}

\begin{theorem}
\label{conv-th}
Let sample graph $S$ be partitioned into $S_1$ and $S_2$ as in Lemma~\ref{a-b-lemma}.
Then if $S_1$ and $S_2$ have convertible algorithms with running times $O(n^{\alpha_i}m^{\beta_i})$ for $i=1,2$, then $S$ has a convertible algorithm with running time of the form $O(n^{\alpha}m^{\beta})$.
\end{theorem}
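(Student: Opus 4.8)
The plan is to run the serial algorithm $\mathcal{A}$ for $S$ supplied by Lemma~\ref{a-b-lemma} at every reducer of the standard mapping scheme and to show its total reducer cost stays within a constant factor of its serial cost. First I would fix notation: let $\mathcal{A}_1$ and $\mathcal{A}_2$ be the given convertible algorithms for $S_1$ and $S_2$, with running times $O(n^{\alpha_1}m^{\beta_1})$ and $O(n^{\alpha_2}m^{\beta_2})$, and let $\mathcal{A}$ be the combined algorithm, which by Lemma~\ref{a-b-lemma} runs in serial time $O(n^{\alpha}m^{\beta})$ with $\alpha=\alpha_1+\alpha_2$ and $\beta=\beta_1+\beta_2$, and whose work breaks into the $S_1$-enumeration by $\mathcal{A}_1$, the $S_2$-enumeration by $\mathcal{A}_2$, and the inspection of the $O(n^{\alpha_1}m^{\beta_1})\cdot O(n^{\alpha_2}m^{\beta_2})$ candidate pairs. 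The mapping scheme is the one fixed in Section~\ref{conv-alg-subsect}: hash the $p=p_1+p_2$ nodes of $S$ to $b$ buckets, create one reducer per nondecreasing list of $p$ bucket numbers (so there are $k=\Theta(b^{p})$ reducers), send each edge to every reducer whose list contains the buckets of both its endpoints, and add a cheap per-instance check so each discovered copy of $S$ is emitted from only one reducer. The goal is then to show the total cost of $\mathcal{A}$ over all these reducers is $O(n^{\alpha}m^{\beta})$.

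Next I would bound the work at a single reducer. Because the data graph is random and this mapping scheme limits skew, with high probability every reducer receives a subgraph with $\Theta(n/b)$ nodes and $\Theta(m/b^{2})$ edges; on such a subgraph, $\mathcal{A}$ costs $O\bigl((n/b)^{\alpha_1}(m/b^{2})^{\beta_1}\bigr)$ for the $S_1$-enumeration, $O\bigl((n/b)^{\alpha_2}(m/b^{2})^{\beta_2}\bigr)$ for the $S_2$-enumeration, and $O\bigl((n/b)^{\alpha_1+\alpha_2}(m/b^{2})^{\beta_1+\beta_2}\bigr)$ for the pair-inspection step. Since all exponents are nonnegative and $n/b,\ m/b^{2}\ge 1$ for any sensible number of reducers, the pair-inspection term dominates, so one reducer costs $O\bigl((n/b)^{\alpha}(m/b^{2})^{\beta}\bigr)$.

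The key step is to multiply the two convertibility hypotheses. Unrolling the definition, the statement that $\mathcal{A}_i$ is convertible with respect to the standard $p_i$-node mapping scheme says precisely that $b^{p_i}(n/b)^{\alpha_i}(m/b^{2})^{\beta_i}=O(n^{\alpha_i}m^{\beta_i})$, with the hidden constant independent of the number of reducers, hence of $b$. Taking the product of the two estimates and using $p_1+p_2=p$, $\alpha_1+\alpha_2=\alpha$, $\beta_1+\beta_2=\beta$, the left-hand sides combine to $b^{p}(n/b)^{\alpha}(m/b^{2})^{\beta}$ and the right-hand sides to $O(n^{\alpha}m^{\beta})$. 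Hence the total reducer cost, which is $\Theta(b^{p})\cdot O\bigl((n/b)^{\alpha}(m/b^{2})^{\beta}\bigr)$, equals $O(n^{\alpha}m^{\beta})$, the same order as the serial running time of $\mathcal{A}$; and since the underlying skew estimate holds with high probability, so does this conclusion, so $\mathcal{A}$ is convertible for $S$. (Equivalently, convertibility of $\mathcal{A}_i$ forces $\alpha_i+2\beta_i\ge p_i$, so $\alpha+2\beta\ge p$, and Theorem~\ref{comp-cost-th} then applies to $\mathcal{A}$.)

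I expect the main obstacle to be bookkeeping rather than any real difficulty: one must verify that the pair-inspection term genuinely dominates a reducer's work throughout the relevant range of $b$, that combining the two ``with high probability'' skew statements still yields a single high-probability event, and that ``proportional to the serial running time'' is invoked in the only direction that matters here, the upper bound --- the matching lower bound being immediate, since each reducer must at least read the subgraph it is sent.
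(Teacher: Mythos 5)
Your proposal is correct and follows essentially the same route as the paper: Lemma~\ref{a-b-lemma} supplies the combined $(\alpha_1+\alpha_2,\beta_1+\beta_2)$-algorithm, and convertibility of each $\mathcal{A}_i$ gives $p_i\le\alpha_i+2\beta_i$, whence $p\le\alpha+2\beta$ and Theorem~\ref{comp-cost-th} applies --- exactly the argument in your closing parenthetical. The paper simply states this three-line version, leaving the per-reducer cost accounting (which you spell out) inside the proof of Theorem~\ref{comp-cost-th}.
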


\begin{proof}
By Lemma~\ref{a-b-lemma}, $S$ has an algorithm with running time $O(n^{\alpha_1+\alpha_2}m^{\beta_1+\beta_2})$.
Let $S$ have $p$ nodes, and let $S_i$ have $p_i$ nodes for $i=1,2$.
Then $p_i\le\alpha_i+2\beta_i$, for $i=1,2$.
Thus $p=p_1+p_2\le (\alpha_1+\alpha_2) + 2(\beta_1+\beta_2)$.
Hence, the algorithm for $S$ constructed by Lemma~\ref{a-b-lemma} is convertible.
\end{proof}

\begin{example}
\label{decomp-ex}
\cite{Alon81} shows that if a sample graph $S$ can be decomposed into

\begin{enumerate}

\item
Pairs of nodes connected by an edge, and

\item
Odd-length cycles (possibly with additional edges between nodes of the cycle),

\end{enumerate}
then in the worst case the data graph has $\theta(m^{p/2})$ instances of $S$.
Recursive applications of Lemma~\ref{a-b-lemma} and Theorem~\ref{conv-th} let us show that there is a matching serial algorithm for every such sample graph, and moreover, the serial algorithm can be converted to a map-reduce algorithm with the same computation time.
We have only to exhibit serial algorithms for the basis cases (edges and odd-length cycles).
A pair of nodes with an edge is very easy; just enumerate the edges in the data graph.
That enumeration takes time $O(m)$, which is $O(m^{p/2})$ for $p=2$.
Thus, there is a $(0,1)$-algorithm, and by Theorem~\ref{comp-cost-th} this algorithm is convertible.  The case of odd cycles is much trickier, and we give the proof of the existence of an $O(m^{p/2})$ algorithm when the sample graph is a cycle of odd length $p$ in Corollary~\ref{cor-oddenum}.
\end{example}

\section{Optimal Serial Algorithms for General Sample Graphs}
\label{serial-alg-sect}

In this section, we show that the bounds of \cite{Alon81} can be met with concrete serial algorithms.  Moreover, these algorithms are all convertible, so they can be used in map-reduce implementations with minimal computation cost.  The difficult part is the case of odd-length cycles, so we handle that first.
In Section~\ref{max-degree-sect}  we consider the restriction of the problem to the case where there is a degree limit for the data graph.  We show that if no node of a data graph with $m$ edges has degree higher than $\sqrt{m}$, then every connected sample graph of $p$ nodes has a serial algorithm with running time $O(m^{p/2})$; this algorithm is convertible, of course.

Throughout this section, we assume that before any of the described manipulations, the data graph has been processed to create two indexes.  One is the index discussed in Section~\ref{sg-decomp-subsect} that lets us test whether an edge exists in $O(1)$ time. This index takes $O(m)$ time to create.
The other index, also creatable in $O(m)$ time, lets us find, for each node, the set of adjacent nodes in time proportional to the degree of that node.
Since all algorithms we discuss run in time at least $O(m)$, we shall neglect the cost of index creation and use.

\subsection{Odd-Length Hamilton Cycles}
\label{odd-cycle-subsect}

To begin, we must introduce the idea of properly ordered $2$-paths. Let $\prec$ be a total order on the nodes of the data graph $G$, such that the nodes appear in nondecreasing order of their degrees. We call a $2$-path $u - v - w$ of $G$ \emph{properly ordered} if its midpoint precedes its endpoints in $\prec$, i.e. if $v \prec u$ and $v \prec w$. The following proposition shows that properly ordered $2$-paths have a $(0, 3/2)$-algorithm.

\begin{lemma}\label{pr:2-paths}
Let $G$ be data graph with $m$ edges, and let $\prec$ be a total order on the nodes of $G$ where nodes appear in nondecreasing order of degree. Then, all properly ordered $2$-paths of $G$ can be generated in $O(m^{3/2})$ time.
\end{lemma}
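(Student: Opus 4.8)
The plan is to generate each properly ordered $2$-path from its midpoint. For a node $v$, let $N^+(v)$ denote the set of neighbors $u$ of $v$ with $v \prec u$, and set $d^+(v) = |N^+(v)|$. A properly ordered $2$-path $u - v - w$ is then exactly an unordered pair $\{u,w\} \subseteq N^+(v)$, so the algorithm is: first build the degree-based order $\prec$ (by counting sort on degrees, in $O(n+m)$ time, after discarding isolated nodes so that $n = O(m)$); then, for each node $v$, scan its adjacency list using the neighbor index to extract $N^+(v)$ in $O(\deg(v))$ time, and emit the $2$-path $u - v - w$ for every unordered pair $\{u,w\}$ taken from $N^+(v)$. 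Each properly ordered $2$-path is produced exactly once, namely when its midpoint is processed, so the algorithm is correct.

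Next I would record the elementary identity $\sum_v d^+(v) = m$, since every edge $\{x,y\}$ is counted once, in $N^+(\cdot)$ of its $\prec$-smaller endpoint. Consequently the scanning phase costs $O(n+m)$ in total, and the emission phase costs $O\bigl(\sum_v \binom{d^+(v)}{2}\bigr) = O\bigl(\sum_v d^+(v)^2\bigr)$, so everything reduces to bounding $\sum_v d^+(v)^2$.

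The crux is the claim $d^+(v) \le \sqrt{2m}$ for every $v$, and this is precisely where the degree ordering is used. Every $u \in N^+(v)$ satisfies $v \prec u$, and since $\prec$ lists nodes in nondecreasing order of degree, $\deg(u) \ge \deg(v) \ge d^+(v)$, the last inequality holding because $N^+(v)$ is a subset of the neighborhood of $v$. Hence $G$ contains at least $d^+(v)$ distinct nodes, namely the elements of $N^+(v)$, each of degree at least $d^+(v)$, so $d^+(v)^2 \le \sum_w \deg(w) = 2m$. Plugging this in, $\sum_v d^+(v)^2 \le \sqrt{2m}\,\sum_v d^+(v) = \sqrt{2m}\cdot m = O(m^{3/2})$, which dominates the $O(n+m)$ preprocessing and yields the claimed $O(m^{3/2})$ running time (equivalently, a $(0,3/2)$-algorithm). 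The only real obstacle is this degree argument; the midpoint enumeration and the arithmetic are routine.
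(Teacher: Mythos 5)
Your proof is correct, and the algorithm is identical to the paper's: enumerate each properly ordered $2$-path from its midpoint $v$ by taking all pairs from the set of $\prec$-larger neighbors of $v$, so the cost is dominated by the number of paths emitted. The only place you diverge is in how the $O(m^{3/2})$ count is established. The paper splits midpoints into high-degree ($\deg(v)\ge\sqrt{m}$) and low-degree cases and bounds each class separately, whereas you prove the single uniform bound $d^+(v)\le\sqrt{2m}$ by observing that the $d^+(v)$ forward neighbors of $v$ each have degree at least $\deg(v)\ge d^+(v)$, so $d^+(v)^2\le\sum_w\deg(w)=2m$; combined with $\sum_v d^+(v)=m$ this gives $\sum_v d^+(v)^2\le\sqrt{2m}\cdot m$. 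Your version is arguably cleaner since it avoids the case analysis and isolates exactly what the degree ordering buys (a $\sqrt{2m}$ cap on forward degree), while the paper's two-case argument follows the presentation of the cited triangle-counting proofs. Both are complete and yield the same bound.
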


\begin{proof}
The proof is similar to the proofs of \cite[Theorem~3.5]{AYZ97} and \cite[Theorem~2]{Schank07}. We can generate all properly ordered $2$-paths of $G$ by considering the nodes as they appear in $\prec$, and for each node $v$, outputing $u - v - w$, for all pairs $u, w \in \Gamma_{\prec}(v)$, where $\Gamma_{\prec}(v) \equiv \{ u \in V : v \prec u \mbox{ and } (v, u) \in E\}$ is the set of $v$'s neighbors that appear after $v$ in $\prec$. The time complexity is bounded by the number of $2$-paths output by the algorithm.

We shall show that there are $O(m^{3/2})$ properly ordered $2$-paths. To this end, we call a node $v$ \emph{high degree}, if $\deg(v) \geq \sqrt{m}$, and \emph{low degree}, otherwise. We observe that there are at most $\sqrt{m}$ high-degree nodes. Thus, for each high-degree node $v$, there are at most $\sqrt{m}$ nodes in $\Gamma_{\prec}(v)$, since each member of $\Gamma_{\prec}(v)$ must itself be high-degree. Thus the number of properly ordered $2$-paths with midpoint $v$ is $O(m)$. In total, there are $O(m^{3/2})$ properly ordered $2$-paths whose midpoint is a high-degree node.
On the other hand, for each edge $e = (v, u)$, with $v \prec u$ and $v$ a low-degree node, there are at most $\sqrt{m}$ properly ordered $2$-paths $u - v - w$ that contain $e$. In total, there are $O(m^{3/2})$ properly ordered $2$-paths whose midpoint is a low-degree node.
\end{proof}

Now, let $S$ be a sample graph that contains an odd-length Hamilton cycle and possibly some additional edges, and let
$$(v_1, v_2, \ldots, v_p)$$
be any occurrence of $S$ in $G$, where the nodes are listed as they appear in the Hamilton cycle, and $v_1$ precedes $v_2, \ldots, v_p$ in $\prec$. Each such occurrence of $S$ can be decomposed into a properly ordered $2$-path $v_p - v_1 - v_2$ and $(p-3)/2$ pairs of nodes connected by an edge. Since we have a $(0, 3/2)$-algorithm for properly ordered $2$-paths and a $(0, 1)$-algorithm for edges, we can apply Lemma~\ref{a-b-lemma} and obtain the following:

\begin{theorem}\label{cor-oddenum}
Let $S$ be a $p$-node sample graph, with $p$ odd, that contains a Hamilton cycle. Then, $S$ has a $(0, p/2)$-algorithm.
\end{theorem}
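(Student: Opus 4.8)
The plan is to combine the $(0,3/2)$-algorithm for properly ordered $2$-paths (Lemma~\ref{pr:2-paths}) with the trivial $(0,1)$-algorithm for single edges, using the decomposition already sketched in the text just before the theorem statement, and then invoke Lemma~\ref{a-b-lemma} to glue the pieces together.

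\medskip\noindent{\em Proof.}
Fix the total order $\prec$ on the nodes of $G$ in which nodes appear in nondecreasing order of degree, and let $S$ be a $p$-node sample graph, $p$ odd, containing a Hamilton cycle. Consider any occurrence $(v_1, v_2, \ldots, v_p)$ of $S$ in $G$, with the nodes listed in Hamilton-cycle order and with $v_1$ the $\prec$-least among $v_1,\ldots,v_p$; note that since $v_1$ precedes both its cycle-neighbors $v_2$ and $v_p$, the $2$-path $v_p - v_1 - v_2$ is properly ordered. Remove the edges $(v_1,v_2)$ and $(v_1,v_p)$ and the vertex $v_1$ from the Hamilton cycle: what remains of the cycle on $v_2,\ldots,v_p$ is the path $v_2 - v_3 - \cdots - v_p$ on $p-1$ (an even number of) vertices, which decomposes into $(p-3)/2$ disjoint edges (namely $(v_2,v_3),(v_4,v_5),\ldots,(v_{p-2},v_{p-1})$), leaving $v_p$ unmatched—but $v_p$ is already an endpoint of the $2$-path. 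Hence the Hamilton-cycle edges partition into one properly ordered $2$-path and $(p-3)/2$ edges, covering all $p$ nodes.

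We now assemble an $(\alpha,\beta)$-algorithm for $S$ by repeated application of Lemma~\ref{a-b-lemma}. The $2$-path on $\{v_p,v_1,v_2\}$ has a $(0,3/2)$-algorithm (Lemma~\ref{pr:2-paths}), and each of the $(p-3)/2$ edge-pairs has the trivial $(0,1)$-algorithm (enumerate the edges of $G$). Composing these $1+(p-3)/2$ pieces via Lemma~\ref{a-b-lemma} yields an algorithm for the Hamilton cycle of $S$ with running time $O(n^0 m^{3/2 + (p-3)/2}) = O(m^{p/2})$. The additional edges of $S$ beyond the Hamilton cycle impose no new enumeration work: once an instance of the Hamilton cycle has been produced, checking the presence in $G$ of each of the constantly many chord edges costs $O(1)$ per chord via the edge-existence index assumed at the start of Section~\ref{serial-alg-sect}. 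Likewise, Step~(3) of Lemma~\ref{a-b-lemma} (generating each instance of $S$ only once) adds only $O(p)$-per-instance overhead, which is absorbed into the constant. Therefore $S$ has a $(0,p/2)$-algorithm.
\qed

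\medskip\noindent{\em Remark on the main obstacle.}
The only genuinely delicate point is the counting argument inside Lemma~\ref{pr:2-paths}, namely that there are only $O(m^{3/2})$ properly ordered $2$-paths—but that is already established. Here the one thing to get right is the bookkeeping of the decomposition: verifying that an occurrence listed with its $\prec$-least vertex first always yields a \emph{properly ordered} $2$-path at that vertex, and that the leftover path on the remaining $p-1$ vertices (even length) splits cleanly into $(p-3)/2$ edges plus the single endpoint $v_p$ that the $2$-path already covers. Everything after that is a mechanical chain of applications of Lemma~\ref{a-b-lemma}, using the nonnegativity of the exponents so that the composition bound $O(n^{\sum\alpha_i}m^{\sum\beta_i})$ applies.
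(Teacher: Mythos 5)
Your proof is correct and follows essentially the same route as the paper's: decompose each occurrence of $S$ into a properly ordered $2$-path centered at the $\prec$-least node plus $(p-3)/2$ node-disjoint edges, compose the pieces via repeated application of Lemma~\ref{a-b-lemma} to get an $O(m^{3/2+(p-3)/2})=O(m^{p/2})$ bound, and verify the non-Hamiltonian edges of $S$ in constant time per candidate cycle. One bookkeeping slip worth fixing: the disjoint edges should be $(v_3,v_4),(v_5,v_6),\ldots,(v_{p-2},v_{p-1})$, which exactly cover the $p-3$ nodes outside the $2$-path $\{v_p,v_1,v_2\}$ with nothing left over; your list starting at $(v_2,v_3)$ double-covers $v_2$ and would leave two nodes (not just $v_p$) unmatched, though this does not affect the substance of the argument.
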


\begin{proof}
Let $S$ be our $p$-node sample graph, where $p$ is odd.  Start by finding all occurrences of cycles $C_p$ in the data graph.  We do so by finding all subgraphs that are properly ordered 2-paths and combining them with $(p-3)/2$ subgraphs that are edges, using Lemma~\ref{a-b-lemma}.  Note that every cycle of length $p$ has some node that precedes all the others according to $<$, so every cycle can be constructed in this way.  Since the properly-ordered 2-path has a $(0,3/2)$-algorithm, and each of the edges has a $(0,1)$-algorithm, we can find all cycles of length $p$ with a $(0,p/2)$-algorithm.

We are not done, because we must check for each $p$-cycle that it is an instance of $S$.  That is, we must check that, in one of the $2p$ orientations of the cycle, all the edges of $S$ not on the cycle exist in the data graph.  However, these checks take time that depends only on $p$, and is in fact $O(p^3)$.  Thus, $S$ also has a $(0,p/2)$-algorithm.
\end{proof}

\begin{algorithm}[t]
\SetAlgoNoLine
\SetArgSty{normal}
\KwIn{Graph $G(V, E)$, total order $<$ on $V$, integer $k \geq 2$.}
\KwOut{Enumeration of all cycles $C_{2k+1}$ on $G$.}

\For{all $(v_1, v_2), (v_1, v_{2k+1}) \in E$
     with $v_1 \prec v_2 \prec v_{2k+1}$}{
        \For{each set of $k-1$ node-disjoint edges
        $(v_3, v_4),\ldots,(v_{2k-1},v_{2k})$\\
        not including $v_1, v_{2}, v_{2k+1}$}{
            \uIf{$v_1$ precedes $v_3,v_4,\ldots,v_{2k-1},v_{2k}$ in $\prec$}{
               \For{all permutations $(i_2, \ldots, i_k)$ of $\{ 3, 5, \ldots, 2k-1\}$ and\\ all edge orientations $b_2 b_3\cdots b_k \in \{0,1\}^{k-1}$}{
            \uIf{all edges
        $(v_2, v_{i_2+b_2}), (v_{i_2+1-b_2}, v_{i_3+b_3}), \ldots, (v_{i_{k-1}+1-b_{k-1}}, v_{i_k+b_k}), (v_{i_k+1-b_k}, v_{2k+1})$ are present in $G$}
            {output cycle $(v_1, v_2, v_{i_2+b_2}, v_{i_2+1-b_2}, \ldots,$ $v_{i_{k-1}+1-b_{k-1}}, v_{i_k+b_k}, v_{i_k+1-b_k}, v_{2k+1}, v_1)$\;}}}}}

\caption{Algorithm $\OddCycle$.}
\label{alg:odd-cycle-enum}
\end{algorithm}

Algorithm~\ref{alg:odd-cycle-enum} ({\em OddCycle})  is an implementation  of Theorem~\ref{cor-oddenum}.  It numerates all cycles $C_{2k+1}$ on a data graph $G$.

\begin{example}
Let $(v_1, v_2, v_3, v_4, v_5$, $v_6, v_7, v_1)$ be a cycle of length $7$ on data graph $G$, and assume that $v_1$ precedes $v_2, \ldots, v_7$ in $\prec$. This cycle can be uniquely decomposed into a properly ordered $2$-path $v_7 - v_1 - v_2$, with midpoint $v_1$, and a set of $2$ node-disjoint edges $(v_3, v_4)$, $(v_5, v_6)$, which do not include $v_1, v_2, v_7$ as their endpoints.
At some point, the properly ordered $2$-path $v_7 - v_1 - v_2$ is generated by the first for-loop, and the edge set $(v_3, v_4)$, $(v_5, v_6)$ is considered in the second for-loop.
%
%$\OddCycle$ verifies that $v_7 - v_1 - v_2$, $(v_3, v_4)$, and $(v_5, v_6)$ are node disjoint, and that $v_1$ preceded $v_3, v_4, v_5, v_6$ in $\prec$.
%
Then, the algorithm generates all possible permutations $(i_2, i_3)$ of the edges $(v_3, v_4)$, $(v_5, v_6)$, and all possible orientations $b_2b_3$ of the edges $(v_{i_2}, v_{i_2+1})$, $(v_{i_3}, v_{i_3+1})$.
For the permutation $(3, 5)$ and the orientation $00$, the algorithm verifies that the edges $(v_2, v_3)$, $(v_4, v_5)$, and $(v_6, v_2)$ are present in $G$, and generates cycle $(v_1, v_2, v_3, v_4$, $v_5, v_6, v_7, v_1)$.
Since $(v_1, v_2, v_3, v_4, v_5$, $v_6, v_7, v_1)$ can be uniquely decomposed as above, the algorithm generates this cycle only once.
\end{example}

Let us analyze the time complexity of $\OddCycle$.  By Prop.~\ref{pr:2-paths}, the body of the first for-loop is executed $O(m^{3/2})$ times. There are at most $m^{k-1}/(k-1)!$ different sets of $k-1$ edges considered in the second for-loop. For each such set, we can determine whether their endpoints do not include the nodes of the properly ordered $2$-path and whether the midpoint of the $2$-path is the smallest node according to $<$ in $O(k)$ time. For each set of $k-1$ node-disjoint edges, there are $(k-1)!$ permutations and $2^{k-1}$ orientations of them considered in the second for-loop. For the given properly ordered $2$-path and each such permutation and edge orientation, we can determine whether the corresponding cycle exists in $O(k)$ time. So $\OddCycle$ is a $(0, (2k+1)/2)$-algorithm, for constant any $k$.

\subsection{General Sample Graphs}

For general sample graphs, we combine  Lemma~\ref{a-b-lemma}, Example~\ref{decomp-ex}, and Theorem~\ref{cor-oddenum} with the observation that an isolated node has a $(1,0)$-algorithm.

\begin{theorem}\label{gen-enum-th}
Let $S$ be a sample graph with $p$ nodes. If $S$ can be decomposed into node-disjoint subgraphs consisting of $q$ isolated nodes, pairs of nodes connected by an edge, and
graphs with an odd-length Hamilton cycle, plus possible edges that connect nodes in two of these subgraphs, then $S$ has a $\bigl(q,(p-q)/2\bigr)$-algorithm.
\end{theorem}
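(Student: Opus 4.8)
The plan is to assemble the claimed $\bigl(q,(p-q)/2\bigr)$-algorithm from three basis algorithms, glued together by repeated applications of Lemma~\ref{a-b-lemma}. The three bases are: a $(1,0)$-algorithm for an isolated node (list the $n$ nodes of the data graph); a $(0,1)$-algorithm for a single edge (list the $m$ edges); and, by Theorem~\ref{cor-oddenum}, a $(0,p_j/2)$-algorithm for a component $C_j$ that consists of an odd-length Hamilton cycle on $p_j$ nodes together with possible chords. Note that in the given decomposition each component, taken with all of its internal edges, is exactly one of these three types, so the induced subgraph of $S$ on the node set of a component is precisely the graph that the corresponding basis algorithm handles (in the odd-cycle case, Theorem~\ref{cor-oddenum} already verifies the chords).

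Concretely, order the components of the decomposition as $C_1,\ldots,C_t$, let $N_i = V(C_1)\cup\cdots\cup V(C_i)$, and let $T_i$ be the subgraph of $S$ induced by $N_i$. I would prove by induction on $i$ that $T_i$ has an $(\alpha_i,\beta_i)$-algorithm, where $\alpha_i$ is the number of isolated-node components among $C_1,\ldots,C_i$ and $\beta_i = (|N_i|-\alpha_i)/2$. The base case $T_1 = C_1$ is one of the three bases above, and the exponents check out in each case. For the inductive step, observe that $T_i$ is the subgraph of $T_{i+1}$ induced by $N_i$ and that $C_{i+1}$ is the subgraph of $T_{i+1}$ induced by $V(C_{i+1})$ (induced subgraphs are hereditary, so both statements are meaningful), so Lemma~\ref{a-b-lemma}, applied to $T_{i+1}$ with the partition $N_{i+1}=N_i\cup V(C_{i+1})$, gives $T_{i+1}$ an $(\alpha_i+\alpha',\beta_i+\beta')$-algorithm, where $(\alpha',\beta')$ are the basis exponents of $C_{i+1}$. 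Since $|N_{i+1}| = |N_i|+|V(C_{i+1})|$, this is exactly $(\alpha_{i+1},\beta_{i+1})$. Taking $i=t$ yields $T_t = S$ with $\alpha_t = q$ and $\beta_t = (p-q)/2$, because $p-q$ equals twice the number of edge components plus the sum of the $p_j$ over the odd-cycle components; and since $q + 2\cdot(p-q)/2 = p$, Theorem~\ref{comp-cost-th} then also makes this algorithm convertible.

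I would also spell out why the edges of $S$ that run between two different components cost nothing beyond what is already accounted for: such an edge lies between exactly two components $C_a$ and $C_b$ with, say, $a<b$, and it is an edge of $T_b$ joining a node of $T_{b-1}$ (which contains $C_a$) to a node of $C_b$; hence it is one of the cross edges that Step~(2) of Lemma~\ref{a-b-lemma} verifies in $O(1)$ time at the $b$-th combination step. Every edge of $S$ is therefore either guaranteed present by a component's enumerator (internal edges, including odd-cycle chords) or verified exactly once by a cross-edge check, so no edge is overlooked and none is re-checked.

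The main thing requiring care — and the place I would be most careful — is this bookkeeping that legitimizes the recursion: that every intermediate $T_i$ really is an induced subgraph of $S$ (so Lemma~\ref{a-b-lemma} applies verbatim at each step), that all exponents stay nonnegative along the way (a hypothesis of Lemma~\ref{a-b-lemma}, trivially satisfied since $1,0,p_j/2\ge 0$), and that the node-count identity indeed forces $\beta = (p-q)/2$. None of this is deep; everything genuinely hard — the $(0,3/2)$-algorithm for properly ordered $2$-paths and its use for odd-length Hamilton cycles — has already been packaged into Lemma~\ref{pr:2-paths} and Theorem~\ref{cor-oddenum}.
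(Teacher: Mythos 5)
Your proposal is correct and follows essentially the same route as the paper, which proves this theorem simply by combining Lemma~\ref{a-b-lemma}, the $(0,1)$-algorithm for edges, the $(0,p_j/2)$-algorithm of Theorem~\ref{cor-oddenum} for odd-Hamilton-cycle components, and the $(1,0)$-algorithm for isolated nodes. Your explicit induction over components, the exponent bookkeeping, and the note that cross-component edges are exactly the edges checked in Step~(2) of Lemma~\ref{a-b-lemma} are just a careful spelling-out of what the paper leaves implicit.
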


Note that $q$ plus twice $(p-q)/2$ is exactly $p$, so this algorithm is always convertible.
Also, since it always pays to trade $n^2$ for $m$ in the running time, we seek for a decomposition of $S$ that minimizes the number of isolated nodes.

\subsection{Data Graphs of Bounded Maximum Degree}
\label{max-degree-sect}

The results of \cite{Alon81} imply that the running time of Theorem~\ref{gen-enum-th} is essentially best possible for general data graphs. However, if we assume an upper bound on the maximum degree of the data graph, we can obtain a stronger upper bound on the running time of the enumeration algorithm.

\begin{theorem}\label{bounded-degree-th}
Suppose data graphs are restricted to have maximum degree of at most $\Delta$ (which may be a function of the number of edges $m$), and let $S$ be a connected sample graph with $p \geq 2$ nodes. Then $S$ has an enumeration algorithm with running time of the form $O(m \Delta^{p-2})$.
\end{theorem}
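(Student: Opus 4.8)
The plan is to give a direct backtracking enumeration that anchors one edge of the sample graph $S$ to an edge of the data graph $G$ and then extends the partial embedding one node at a time along a spanning tree of $S$, using the degree bound $\Delta$ to keep the branching factor small. Recall that, as stated at the start of this section, we may assume $G$ comes with an $O(1)$-time edge-existence index and an adjacency-list index (each built in $O(m)$ time), so we will neglect those costs.

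Since $S$ is connected with $p\ge 2$ nodes, first fix one edge $\{u_1,u_2\}$ of $S$, a spanning tree $T$ of $S$ containing this edge, and an ordering $u_1,u_2,\ldots,u_p$ of the nodes of $S$ such that for every $i\ge 2$ the node $u_i$ is adjacent in $T$ (hence in $S$) to some $u_j$ with $j<i$ --- root $T$ at $u_1$ with $u_2$ as its first child and list the nodes parent-before-child. The tree $T$, the ordering, and the list of non-tree edges of $S$ depend only on $S$ (a constant-size object), so all of this is $O(1)$. The enumeration itself is: for each edge $(a,b)$ of $G$ and each of the two assignments of $(\phi(u_1),\phi(u_2))$ to $(a,b)$ or $(b,a)$, try to extend $\phi$ by processing $i=3,\ldots,p$ in order; for the parent $u_j$ ($j<i$) of $u_i$ in $T$, use the adjacency-list index to iterate over all neighbors $w$ of $\phi(u_j)$ in $G$ and recursively set $\phi(u_i)=w$. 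Because $G$ has maximum degree at most $\Delta$, each of these $p-2$ extension steps offers at most $\Delta$ choices, so the number of complete maps $\phi:V(S)\to V(G)$ examined is $O(m\,\Delta^{p-2})$. Anchoring to an edge rather than a node is what yields the factor $m$ instead of $n$ (and costs us only two fixed nodes rather than one).

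For each complete $\phi$ we spend time depending only on $p$: check that $\phi$ is injective, and check via the edge-existence index that every non-tree edge $\{u_i,u_{i'}\}$ of $S$ maps to an edge of $G$ (one $O(1)$ query each); the tree edges of $S$ hold automatically by construction. The surviving maps are exactly the injective homomorphisms $S\to G$, each produced exactly once. To report each subgraph instance of $S$ once rather than up to $|\mathrm{Aut}(S)|\le p!$ times, when $\phi$ passes the checks we enumerate in $O(p!)$ time the maps $\phi\circ\sigma$, $\sigma\in\mathrm{Aut}(S)$, that produce the same subgraph and emit it only when $\phi$ is lexicographically smallest among them; this is a constant per instance. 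The total is $O(m\,\Delta^{p-2})$ (which subsumes the $O(m)$ index construction since $p\ge 2$), as claimed.

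The one point requiring care is the branching-factor bookkeeping --- verifying that anchoring to an edge (cost $m$, two nodes pinned down) and then growing the embedding along a spanning tree produces precisely $p-2$ further branch points, each of size at most $\Delta$ --- together with the observation that it is exactly the connectivity of $S$ that guarantees the existence of such a spanning-tree ordering. The injectivity check, the non-tree-edge checks, and the automorphism-based deduplication are all routine because $p$ is a fixed constant.
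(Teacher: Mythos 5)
Your proposal is correct and is essentially the paper's own argument: the paper proves the bound by induction on $p$, peeling off a non-articulation point at each step, which when unrolled is exactly your spanning-tree backtracking --- an edge anchor costing $O(m)$ (the $p=2$ base case) followed by $p-2$ extension steps each of branching factor at most $\Delta$, with the same lexicographic-first deduplication per instance. The only difference is presentational, inductive versus direct.
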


\begin{proof}
The proof is by induction on $p$. For the basis, we consider a sample graph $S$ with $p = 2$ nodes. Since $S$ is connected, there must be an edge between the two nodes, and we know how to find instances of $S$ in data graph $G$ in $O(m)$ time (even without a constraint on the degree of $G$).

For the induction step, consider a sample graph $S$ with $p \geq 3$ nodes. Let $u$ be any node of $S$ that is not an articulation point, let $S_1$ consist of node $u$, and let $S_2$ be the subgraph induced by all other nodes. Since $u$ is not an articulation point, $S_2$ is a connected sample graph with $p-1 \geq 2$ nodes. By induction, $S_2$ has an algorithm with running time $O(m \Delta^{p-3})$. It therefore returns no more than $O(m \Delta^{p-3})$ instances of $S_2$. Let $u$ be connected to node $v$ in $S$ (such a node exists because $S$ is connected, and thus $u$ is not isolated). Then for each instance of $S_2$ in $G$, we map $u$ to each of $v$'s neighbors in $G$, and check which of these neighbors allow us to complete an instance of $S$ (because all the necessary edges involving $u$ exist in $G$). For each instance of $S_2$, there are at most $\Delta$ neighbors of $v$ to try as potential images of $u$, and for each of them, there are at most $p-1$ edges of $S$ to check for existence in $G$. If we index edges by their endpoints, we need $O(\Delta$) time per instance of $S_2$.  We also need to lexicographically order the nodes of $G$ so that it is possible to emit the resulting instance of $S$ only if it is lexicographically first. This can be implemented similarly to the proof of Lemma~\ref{a-b-lemma}. The total time taken is $O(m \Delta^{p-2})$.
\end{proof}

In the proof of Theorem~\ref{bounded-degree-th}, the maximum degree $\Delta$ of the data graph can be a constant or any function of $m$ and $n$. Therefore, for data graphs of constant maximum degree, all sample graphs have an $O(m)$-time enumeration algorithm, and for data graphs of maximum degree $O(\sqrt{m})$, all sample graphs have a $(0, p/2)$-algorithm. However, such an algorithm is convertible only if the maximum degree $\Delta$ of the data graph is large enough compared with the number $b$ of buckets into which we hash its nodes (e.g., if $\Delta / b = \Omega(\log n)$). Then, similarly to the proof of Theorem~\ref{comp-cost-th}, we can assume that skew is limited, and each reducer processes a subgraph with $O(m/b^2)$ edges and maximum degree $O(\Delta/b)$. Since the number of reducers is $b^p$, the total work is
$$O(b^p m/b^2 (\Delta/b)^{p-2}) = O(m\Delta^{p-2}).$$

Moreover, we note that the running-time bound of Theorem~\ref{bounded-degree-th} is essentially best possible, in the sense that for any $p \geq 2$ and any $\Delta$ sufficiently larger than $p$, a $\Delta$-regular tree with $n$ nodes contains $\Theta(n \Delta^{p-2}) = \Theta(m \Delta^{p-2})$ instances of a star with $p$ nodes. More specifically, each of the $\Theta(n/\Delta)$ nonleaf nodes of the tree is the root of $\binom{\Delta}{p-1} = \Theta(\Delta^{p-1})$ different stars with $p$ nodes.

\subsection{Joins for binary relations of different sizes}

For single binary relations, the bounds that we have given here are tight in that for any size of the relation,
the running time of our serial algorithm that enumerates all sample
graphs meets, to within a constant factor, the lower bound (given in \cite{Alon81}) hence it is
both optimal as a serial algorithm and is also a convertible algorithm. However, when we have a multiway
join over binary relations of different sizes, this is not the case as pointed out in \cite{NgoPRR12}. It is an open
question whether we can refine the bounds for this case to be more precise. Below we have done such a refinement; we give a complete
analysis for the case the sample graph is a cycle of size 5 and binary relations are of different sizes.

We are looking at the join of
$$R_1(A,B) ~JOIN~ R_2(B,C) ~JOIN~
 R_3(C,D) ~JOIN~ R_4(D,E) ~JOIN~ R_5(E,A)$$
where each relation $R_i, i=1,\ldots ,5$ has $n_i$ tuples.

\noindent{\bf Case A.}
\begin{enumerate}
 \item {\sl Condition:}  $n_1 n_5 n_3 \geq n_2 n_4$ for all cyclic automorphisms of the cycle that defines the 5-way join.

\item {\sl Complexity:}  the upper and lower bounds meet at $\sqrt{n_1 n_2 \cdots n_5}$.
\end{enumerate}

In this case the upper bound is given by the result in \cite{NgoPRR12}.
For the lower bound we construct a set of relations so that the number of sample graphs is equal to $\sqrt{n_1 n_2 \cdots n_5}$.
The size of the domain for attribute $A$ should be $\sqrt{n_1 n_5 n_3 / n_2 n_4}$
and the other attributes have analogous sizes. That is, in the
numerator are the sizes of the relations that contain $A$ (these are  $n_1$ and $n_5$)
and also the size of the relation opposite $A$ in the cycle; in this
case the relation opposite is $R_3(C,D)$ and its size is $n_3$. The
denominator is the other two relation sizes. This indeed works because: a) The product
of the sizes for the two attributes in the schema of any relation is
the size of that relation.  b) The product of all the distinct values in the attributes is
$\sqrt{n_1 n_2 \cdots n_5}$, which gives the number of all distinct sample graphs in this set of relations.

\noindent{\bf Case B.}
\begin{enumerate}
 \item {\sl Condition:} There is a condition from case A that is not satisfied, so, wlog say $n_1 n_5 n_3 \leq n_2 n_4$.

\item {\sl Complexity:}  the upper and lower bounds meet at  $n_1n_5n_3$.
\end{enumerate}

For the upper bound the algorithm now is to take the join of relations
$R_1$ and $R_5$ first and then take all combinations of the result
of the join with each tuple from relation $R_3$. For each combination we check
whether there are edges from relations $R_4$ and $R_2$ to complete the cycle.
This gives us an algorithm of complexity $n_1n_5n_3$.

We argue that
the most tuples in the result of the join will be created when
there is one value of attribute $A$ that belongs to $n_1$ tuples
of $R_1$ and also to $n_5$ tuples of $R_5$.
The argument is easy: if you split the values of
$A$ so that each value belongs to fewer tuples but still the total number of tuples in
$R_1$, $R_5$ respectively are not more than $n_1 $, $n_5 $ respectively then the total number
of tuples in the join of $R_1$ with $R_5$ only decreases.

For the lower bound, we construct a database
again but we should take several cases:
a) $ n_2> n_1 n_3$ and $ n_4> n_3n_5$. Then we assign one value to the
attribute that is shared between relations $R_1$ and $ R_5$ and
populate the relations $ R_1$, $R_5$, $R_3$ with $n_1$, $n_5$, $n_3$ tuples respectively.
Now relations $R_2$ and $R_4$ are forced to have $ n_1 n_3$  and $n_3n_5$ tuples
respectively, but it is fine because we have assumed that  $n_2 > n_1 n_3$ and $n_4 > n_3n_5$.
b) Wlog  $n_2 < n_1 n_3$. In this case we are a little more careful with the
construction. I.e., we assign only a small number of values to
attributes that belong to the relation $R_2$.  But the construction
does not present more complications because since $n_1 n_5 n_3 < n_2 n_4$,
$R_4$ will be allowed enough tuples to accommodate the number of tuples
of the relations $R_5$ and $R_3$. This is so because $n_4>n_3$ as we can deduce by
reasoning that if not, then $n_4<n_3$ combined with $n_2<n_1n_5$ gives us $n_1 n_5 n_3 > n_2 n_4$.

E.g., if $n_1=1, n_2=n, n_3=1, n_4=n,n_5=1$ then the upper and lower bound is equal to $n$.

\section{Conclusions and Open Problems}
\label{conclusions-sect}

The problem of enumerating instances of a sample graph in a huge data graph has many
applications, including social networks, threat detection, and Biomolecular networks.  We have, in this paper given algorithms that use a single round of map-reduce and are able to detect all instances of a given sample graph.  These algorithms are efficient both in communication cost between mappers and reducers and in the computation cost at the mappers and reducers.
Some interesting extensions and open problems remain.

\begin{itemize}

\item
We have not addressed the case where nodes and/or edges have labels.  Neither have we addressed the case of directed graphs.  Many of the same techniques carry over in a straightforward way.  For instance, we can still express the instances of a labeled, directed sample graph as a union of CQ's.  The automorphism groups tend to be smaller, so the number of CQ's is greater, but the same methods for evaluating CQ's by a multiway join will work.
We expect that there are provably convertible algorithms in all or almost every case, but the mapping schemes may require some thought.

\item
While we are able to minimize the number of CQ's for a given sample graph with respect to particular algorithms for generating CQ's, there may be other algorithms that will yield fewer CQ's or allow more efficient evaluation of a collection of CQ's.  We may want to consider methods other than a multiway join (and thus raise the issue of algorithms taking several rounds of map-reduce) to evaluate a collection of CQ's, especially collections with common subexpressions.

\item
Are there other restrictions on the data graph, besides limiting the degree, that yield superior convertible algorithms?

\item
To what extent does the notion of a convertible algorithm extend to other classes of map-reduce problems?
\end{itemize}

\bibliographystyle{plain}
\bibliography{map-reduce}
\end{document}